\newcolumntype{P}[1]{>{\RaggedRight\hspace{0pt}}p{#1}}
\newcolumntype{C}[1]{>{\centering\hspace{0pt}}p{#1}}
\newtheorem{theorem}{Theorem}
\newtheorem{corollary}[theorem]{Corollary}
\newtheorem{definition}{Definition}
\newcounter{myoptimizationproblemctr}
\newenvironment{myoptimizationproblem}{
   \bigskip\noindent
   \refstepcounter{myoptimizationproblemctr}
   $(\mathbf{P\themyoptimizationproblemctr})$ 
   }{}   
\begin{document}





\title{Distributed Resource Allocation in 5G Cellular Networks\footnote{Book chapter in \emph{Towards 5G: Applications, Requirements and Candidate Technologies},  Wiley, 2015, (Eds. Rath Vannithamby and Shilpa Telwar).
}}
\author{Monowar Hasan and Ekram Hossain \\
University of Manitoba, Canada}
\date{}
\maketitle

\section{Introduction}

The fifth generation (5G) cellular networks are expected to provide wide variety of high rate (i.e., 300 Mbps and 60 Mbps in downlink and uplink,
respectively, in 95 percent of locations and time \cite{metis_5g})  multimedia services. The 5G communication platform is seen as a global unified standard with seamless connectivity among existing standards, e.g., High Speed Packet Access (HSPA), Long Term Evolution-Advanced (LTE-A) and Wireless Fidelity (WiFi). Some of the emerging features and trends of 5G networks are: multi-tier dense heterogeneous networks \cite{horizon_5g, toshiba_5g}, device-to-device (D2D) and machine-to-machine (M2M) communications \cite{toshiba_5g, d2d_5g}, densification of the heterogeneous base stations (e.g., extensive use of relays and small cells) \cite{nw_dens_5g}, cloud-based radio access network \cite{toshiba_5g}, integrated use of  multiple  radio  access  technologies \cite{multi_rat_5g}, wireless network virtualization \cite{toshiba_5g},  massive and 3D MIMO \cite{toshiba_5g, mimo_5g}, millimeter wave \cite{mmw_5g} and full duplex \cite{5g_shilpa} communications.

The 5G cellular wireless systems will have a multi-tier architecture consisting  of  macrocells,  different  types  of  licensed  small  cells and  D2D networks to serve users with different quality-of-service (QoS) requirements in a spectrum efficient manner. Distributed resource allocation and interference management is one of the fundamental research challenges for such multi-tier heterogeneous networks. In this chapter, we consider the radio resource allocation problem in a multi-tier orthogonal frequency division multiple access (OFDMA)-based cellular (e.g., 5G LTE-A) network. In particular, we present three novel approaches for distributed resource allocation  in such networks utilizing the concepts of stable matching, factor-graph based message passing, and distributed auction. 

Matching theory, a sub-field of economics, is a promising concept for distributed resource management in wireless networks. The matching theory allows low-complexity algorithmic manipulations to provide a decentralized self-organizing solution to the resource allocation problems. In matching-based resource allocation, each of the agents (e.g., radio resources and transmitter nodes) ranks the opposite  set using a preference relation. The solution of the matching is able to assign the resources with the transmitters depending on the preferences.

The message passing approach for resource allocation provides low (e.g., polynomial time) complexity solution by distributing the computational load among the nodes in the network.  In the radio resource allocation problems, the decision making agents (e.g., radio resources and the transmitters) form a virtual graphical structure. Each node computes and exchanges simple messages with neighboring nodes in order to find the solution of the resource allocation problem.

Similar to matching based allocation, auction method is also inherited from economics and used in wireless resource allocation problems. Resource allocation algorithms based on auction method provides polynomial complexity solution which are shown to output near-optimal performance. The auction process evolves with a bidding process, in which unassigned agents (e.g., transmitters) raise the cost and bid for resources simultaneously. Once the bids from all the agents are available, the resources are assigned to the highest bidder.


We illustrate each of the modeling schemes with respect to a practical radio resource allocation problem. In particular, we consider a multi-tier network consisting a macro base station (MBS), a set of small cell base
stations (SBSs) and corresponding small cell user equipments (SUEs), as well as D2D user equipments (DUEs). There is a common set of radio resources (e.g., resource blocks [RBs]) available to the network tiers (e.g., MBS, SBSs
and DUEs). The SUEs and DUEs use the available resources (e.g., RB and power level) in an underlay manner as long as the interference caused to the macro tier (e.g., macro user equipments [MUEs]) remains below a given threshold. The goal of resource allocation is to allocate the available RBs and transmit power levels to the SUEs and DUEs in order to maximize the spectral efficiency without causing significant interference to the MUEs. We show that due to the nature of the resource allocation problem, the centralize solution is computationally expensive and also incurs huge signaling overhead. Therefore, it may not be feasible to solve the problem by a single centralized controller node (e.g., MBS) especially in a dense network. Hence distributed solutions with low signaling overhead is desirable.


We assume that readers are familiar with the basics of OFDMA-based cellular wireless networks (e.g., LTE-A networks), as well as have preliminary background on theory of computing (e.g., data structures, algorithms and computational complexity). Followed by a brief theoretical overview of the modeling tools (e.g., stable matching, message passing
and auction algorithm), we present the distributed solution approaches for the resource allocation problem in the aforementioned network setup. We also provide a brief qualitative comparison in terms of various performance 
metrics such as complexity, convergence, algorithm overhead etc. 

The organization of the rest of the chapter is as follows: the system model, related assumptions, and the resource allocation problem is presented in Section \ref{sec:sys_model}. The disturbed solutions for resource allocation problem, e.g., stable matching, message passing and auction method are discussed in the Sections \ref{sec:sm_ra}, \ref{sec:mp_ra}, \ref{sec:am_ra}, respectively. The qualitative comparisons among the resource allocation approaches are presented in Section \ref{sec:comparisons}. We conclude the chapter in Section \ref{sec:conclusion} highlighting the directions for future research. Key mathematical symbols and notations used in the chapter are summarized in Table \ref{tab:notations}.

\begin{table}[!h]
\centering
\begin{footnotesize}
\begin{tabular}{c P{10.0cm}}
\toprule
\multicolumn{1}{c}{Notation} & \multicolumn{1}{c}{Physical Interpretation} \\
\midrule
\multicolumn{2}{l}{$\bullet$~\textit{Network model:}}  \\
$\mathcal{U}^{\mathrm m}$, $\mathcal{U}^{\mathrm s}$, $\mathcal{U}^{\mathrm d}$ & Set of MUE, SUE and D2D pairs, respectively \\
$\mathcal{K}^{\mathrm T}$, $\mathcal{K}^{\mathrm R}$ & Set of underlay transmitters and receivers, respectively  \\
$\mathcal{N}$, $\mathcal{L}$ & Set of RBs and power levels, respectively \\
$K$, $N$, $L$ & Total number of underlay transmitters, RBs, and power levels, respectively \\
$u_k$ & The UE associated with underlay transmitter $k$ \\
$x_{k}^{(n,l)}, \mathbf{X}$ & Allocation indicator, whether transmitter $k$ using resource $\lbrace n, l \rbrace$ and the indicator vector, respectively\\
$g_{i,j}^{(n)}$ & Channel gain between link $i,j$ over RB $n$ \\
$\gamma_{u_k}^{(n)}$ & SINR in RB $n$ for the UE $u_k$\\
$\Gamma_{u_k}^{(n,l)}$ & Achievable SINR of the UE $u_k$ over RB $n$ using power level $l$\\
$p_{k}^{(n)}$ & Transmit power of transmitter $k$ over RB $n$\\
$R_{u_k}$ & Achievable data rate for $u_k$ \\
$I^{(n)}$, $I_{\mathrm{max}}^{(n)}$ & Aggregated interference and threshold limit for the RB $n$, respectively \\
$\mathfrak{U}_{k}^{(n,l)}$ & Utility for transmitter $k$ using resource $\lbrace n, l \rbrace$ \\
\midrule
\multicolumn{2}{l}{$\bullet$~\textit{Stable matching:}}  \\
$\mu$ & Matching (e.g., allocation) of transmitter to the resources \\
$i_1 \succeq_j  i_2$ & Preference relation for agent $j$ (i.e., $i_1$ is more preferred than $i_2$) \\
$\boldsymbol{\mathscr{P}}_{k}(\mathcal{N}, \mathcal{L})$, $\boldsymbol{\mathscr{P}}_{n}(\mathcal{K}^{\mathrm T}, \mathcal{L})$ & Preference profile for the transmitter $k$ and RB $n$, respectively \\
\midrule 
\multicolumn{2}{l}{$\bullet$~\textit{Message passing:}}  \\
$\delta_{\lbrace n,l \rbrace \rightarrow k} \big( x_{k}^{(n,l)} \big)$ & Message delivered by the resource $\lbrace n,l \rbrace$ to the transmitter $k$\\
$\delta_{k \rightarrow \lbrace n,l \rbrace} \big( x_{k}^{(n,l)} \big)$ & Message from transmitter $k$ to the resource $\lbrace n,l \rbrace$  \\
$\psi_{\lbrace n,l \rbrace \rightarrow k}$ & Normalized message from the resource $\lbrace n,l \rbrace$ to the transmitter $k$ \\
$\psi_{ k \rightarrow \lbrace n,l \rbrace }$ & Normalized message from the transmitter $k$ to the resource $\lbrace n,l \rbrace$\\
$\tau_{k}^{(n,l)}$ & Node marginals for the transmitter $k$ using resource $\lbrace n,l \rbrace$ \\ 
\midrule
\multicolumn{2}{l}{$\bullet$~\textit{Auction method:}}  \\
$C_{k}^{(n,l)}$ & Cost for transmitter $k$ using resource $\lbrace n,l \rbrace$ \\ 
$B_{k}^{(n,l)}$ & Data rate (multiplied by a weighting factor) achieved by transmitter $k$ using resource $\lbrace n,l \rbrace$ \\ 
$\mathfrak{b}_{k}^{( n,l)}$ & Local bidding information available to transmitter $k$ for the resource $\lbrace n,l \rbrace$ \\
$\epsilon$ & Minimum bid increment parameter \\
$\Theta_{k} = \lbrace n,l \rbrace$ & Assignment of resource $\lbrace n,l \rbrace$ to the transmitter $k$\\
\midrule
\multicolumn{2}{l}{$\bullet$~\textit{Miscellaneous:}}  \\
$|\mathbf{y}| $ & Length of the vector $\mathbf{y}$ \\
$y(t)$ & Value of variable $y$ at any iteration $t$ \\
$z := y$ & Assignment of the value of variable $y$ to the variable $z$ \\
\textit{/* comment */} & Commented text inside algorithms \\
\toprule
\end{tabular}
\end{footnotesize}
\caption{List of major notations}{}
\label{tab:notations}
\end{table}

\section{System Model} \label{sec:sys_model}

\subsection{Network Model and Assumptions} \label{subsec:nw_model}

\begin{figure}[h t b]
\centering
\includegraphics[width=3.0in]{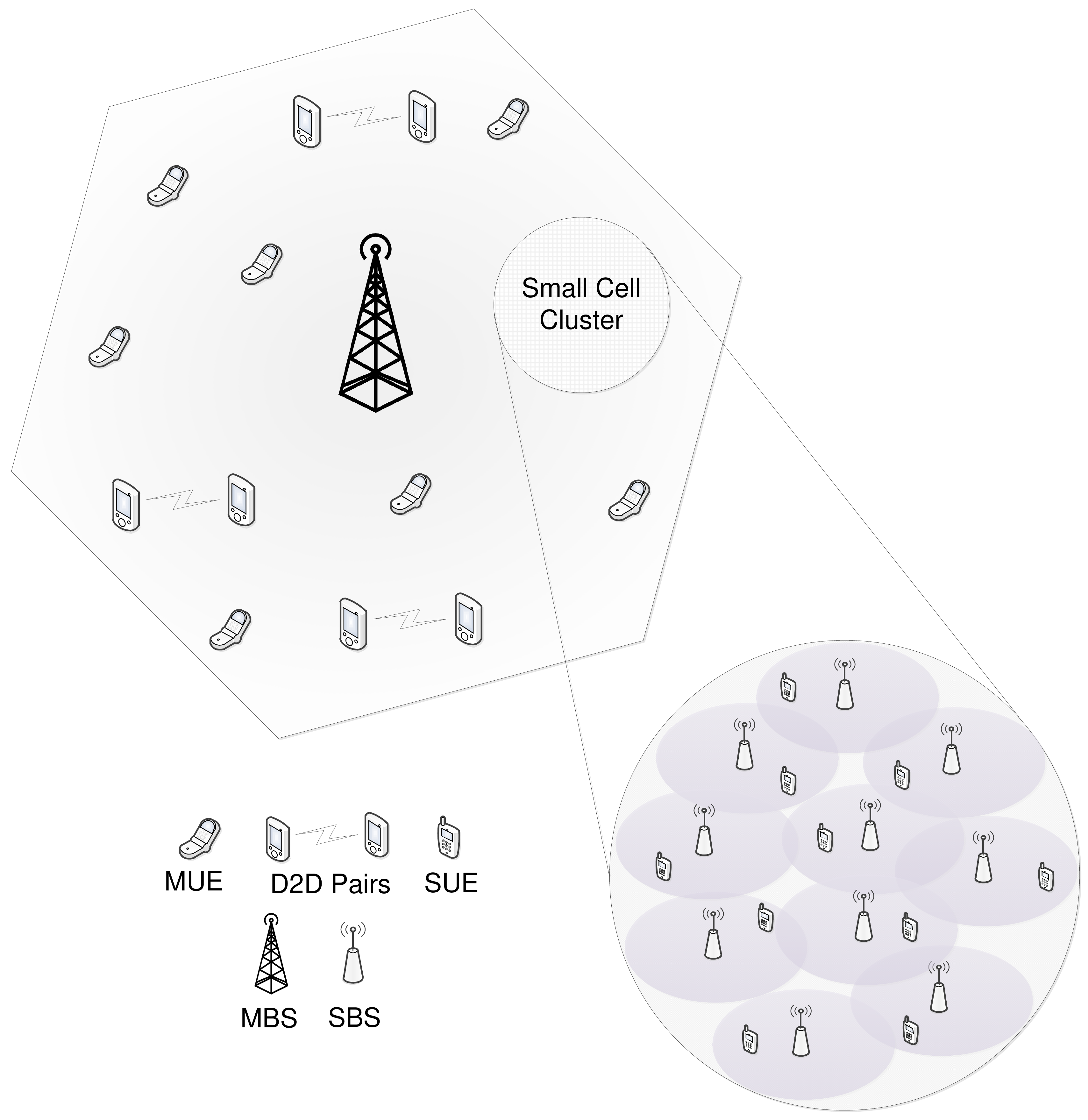}
 \caption{Schematic diagram of the heterogeneous network model. The D2D pairs, SBSs and SUEs are underlaid within the macro tier by reusing same set of radio resources.}
\label{fig:sys_mod}
\end{figure}

Let us consider a transmission scenario of heterogeneous network as shown in Fig. \ref{fig:sys_mod}. The network consists of one MBS and a set of $C$ cellular MUEs, i.e., $\mathcal{U}^{\mathrm m} = \lbrace 1,2,\cdots, C \rbrace$. There are also $D$ D2D pairs and a cluster of $S$ SBSs located within the coverage area of the MBS. The set of SBSs is denoted by $\mathcal{S} = \lbrace 1, 2, \cdots S\rbrace$. For simplicity we assume that each SBS serves only one SUE for a single time instance and the set of SUE is given by $\mathcal{U}^{\mathrm s} = \lbrace 1,2,\cdots, S \rbrace$. The set of D2D pairs is denoted as $\mathcal{U}^{\mathrm d} = \lbrace 1,2,\cdots, D \rbrace$. In addition, the $d$-th element of the sets $\mathcal{U}^{\mathrm d_{T}}$ and $\mathcal{U}^{\mathrm d_{R}}$ denotes the transmitter and receiver UE of the D2D pair $d \in \mathcal{U}^{\mathrm d}$, respectively. The set of UEs in the network is given by $\mathcal{U} = \mathcal{U}^{\mathrm m} \cup \mathcal{U}^{\mathrm s} \cup \mathcal{U}^{\mathrm d}$. For notational convenience, we denote by $\mathcal{K}^{\mathrm T} = \mathcal{S} \cup \mathcal{U}^{\mathrm d_T}$ the set of underlay transmitters (e.g., SBSs and transmitting D2D UEs) and $\mathcal{K}^{\mathrm R} =  \mathcal{U}^{\mathrm s} \cup \mathcal{U}^{\mathrm d_{R}}$ denotes the set of underlay receivers (e.g., SUEs and receiving D2D UEs). 


The SBSs and DUEs are underlaid within the \textit{macro tier} (e.g., MBS and MUEs). Both the macro tier and the \textit{underlay tier} (e.g., SBSs, SUEs and D2D pairs) use the same set $\mathcal{N} = \lbrace 1, 2, \cdots N \rbrace$ of orthogonal RBs\footnote{The minimum scheduling unit of LTE-A standard is referred to as an RB. One RB consists of 12 subcarriers (e.g., 180 kHz) in the frequency domain and one sub-feame (e.g., 1 millisecond) in the time domain. For a brief overview of heterogeneous network in the context of LTE-A standard refer to \cite[Chapter 1]{hetnet_book_sir}.}. Each transmitter node in the underlay tier (e.g., SBS and D2D transmitter) selects one RB from the available $N$ RBs. In addition, the underlay transmitters are capable of selecting the transmit power from a finite set of power levels, i.e., $\mathcal{L} = \lbrace 1, 2, \cdots L \rbrace$.  
Each SBS and D2D transmitter should select a suitable RB-power level combination. This RB-power level combination is referred to as \textit{transmission alignment}\footnote{Throughout this chapter we use the term \textit{resource} and \textit{transmission alignment} interchangeably.} \cite{prabo_journal}. For each RB $n \in \mathcal{N}$, there is a predefined threshold $I_{\mathrm{max}}^{(n)}$ for maximum aggregated interference caused by the underlay tier to the macro tier. We assume that value of $I_{\mathrm{max}}^{(n)}$ is known to the underlay transmitters by using the feedback control channels. An underlay transmitter (i.e., SBS or transmitter DUE) is allowed to use the particular transmission alignment as long as the cross-tier interference to the MUEs is within the threshold limit.

The system model  considered here is a \textit{multi-tier heterogeneous network} since each of the network tiers (e.g., macro tier and underlay tier consisting with small cells and D2D UEs) has different transmit power range, coverage region and specific set of users with different application requirements. It is assumed that the user association to the base stations (either MBS or SBSs) is completed prior to resource allocation. In addition, the potential DUEs are discovered during the D2D session setup by transmitting known synchronization or reference signal (i.e., beacons) \cite{network_asst_d2d}. According to our system model, only one MUE is served on each RB to avoid co-tier interference within the macro tier. However multiple underlay UEs (e.g., SUEs and DUEs) can reuse the same RB to improve the spectrum utilization. This reuse causes severe cross-tier interference to the MUEs, and also co-tier interference within the underlay tier; which leads the requirement of an efficient resource allocation scheme. 

\subsection{Achievable Data Rate}

The MBS transmits to the MUEs using a fixed power $p_{M}^{(n)} > 0$ for $\forall n$.  For each underlay transmitter $k \in \mathcal{K}^{\mathrm T}$, the transmit power over the RBs is determined by the vector $\mathbf{P}_{\mathrm k} = \left[ p_k^{(1)}, p_k^{(2)}, \cdots, p_k^{(N)} \right]^{\mathsf{T}}$ where $p_k^{(n)} \geq 0$ denotes the the transmit power level of the transmitter $k$ over RB $n$. The transmit power $p_k^{(n)}, ~\forall n$  must be selected from the finite set of power levels $\mathcal{L}$. Note that if the RB $n$ is not allocated to the transmitter $k$, the corresponding power variable $p_k^{(n)} = 0$. Since we assume that each underlay transmitter selects only one RB, only one element in the power vector $\mathbf{P}_{\mathrm k}$ is non-zero.
 
 All links are assumed to experience independent block fading. We denote by $g_{i, j}^{(n)}$ the channel gain between the links $i$ and $j$ over RB $n$ and defined by $g_{i, j}^{(n)} = \beta_{i,j}^{(n)} d_{i,j}^{-\alpha} $ where  $\beta_{i,j}^{(n)}$ denote the channel fading component between link $i$ and $j$ over RB $n$, $d_{i,j}$ is the distance between node $i$ and $j$, and $\alpha$ is the path-loss exponent.
 
 For the SUEs, we denote $u_k$ as the SUE associated to SBS $k \in \mathcal{S}$, and for the DUEs, $u_k$ refer to the receiving D2D UE of the D2D transmitter $k \in \mathcal{U}^{\mathrm d_{T}}$. The received signal-to-interference-plus-noise ratio (SINR) for the any arbitrary SUE or D2D receiver, i.e., $u_k \in \mathcal{K}^{\mathrm R}, k \in \mathcal{K}^{\mathrm T}$ over RB $n$ is given by
 \begin{equation} \label{eq:sinr_underlay}
 \gamma_{u_k}^{(n)} = \frac{g_{k, u_k}^{(n)}p_{k}^{(n)}}{\underbrace{ g_{M, u_k}^{(n)}p_{M}^{(n)}}_\text{interference from macro tier} + \underbrace{\sum\limits_{\substack{ k^\prime \in  \mathcal{K}^{\mathrm T}, k^\prime \neq k  }} g_{k^\prime, u_k}^{(n)} p_{k^\prime}^{(n)}}_\text{interference from underlay tier} + ~\sigma^2}
 \end{equation}
 where $g_{k,u_k}^{(n)}$ is the link gain between the SBS and SUE (e.g., $u_k \in \mathcal{U}^{\mathrm s},  k \in \mathcal{S}$) or the link gain between the D2D UEs (e.g., $u_k \in \mathcal{U}^{\mathrm d_{R}}, k \in \mathcal{U}^{\mathrm d_T}$), and $g_{M, u_k}^{(n)}$ is the interference gain between the MBS and the UE $u_k$. In Equation (\ref{eq:sinr_underlay}), the variable $\sigma^2 = N_0 B_{\mathrm{RB}}$ where $B_{\mathrm {RB}}$ is the bandwidth corresponding to an RB and $N_0$ denotes the thermal noise. Similarly, the SINR for the MUE $m \in \mathcal{U}^{\mathrm m}$ over RB $n$ can be written as follows:
 \begin{equation}
 \gamma_{m}^{(n)} = \frac{g_{M, m}^{(n)}p_{M}^{(n)}}{\sum\limits_{\substack{ k \in   \mathcal{K}^{\mathrm T} }} g_{k, m}^{(n)} p_{k}^{(n)} + ~\sigma^2}.
 \end{equation}
  Given the SINR, the data rate of the UE $u \in \mathcal{U}$ over RB $n$ can be calculated according to the Shannon's formula, i.e., $R_{u}^{(n)} = B_{\mathrm {RB}} \log_2 \left(1 +  \gamma_{u}^{(n)} \right)$.

\subsection{Formulation of the Resource Allocation Problem} \label{subsec:rap_org}

The objective of resource (i.e., RB and transmit power) allocation problem is to obtain the assignment of RB and power level (e.g., transmission alignment) for the underlay UEs (e.g., D2D UEs and SUEs) that maximizes the achievable sum data rate. The RB and power level allocation indicator for any underlay transmitter $k \in \mathcal{K}^{\mathrm T}$ is denoted by a binary decision variable $x_{k}^{(n, l)}$ where
\begin{equation}
x_{k}^{(n, l)}  = \begin{cases}
 1, \quad  \text{if the transmitter $k$ is trasnmitting over RB $n$ with power level $l$} \\
 0, \quad \text{otherwise.}
\end{cases}
\end{equation}
Note that the decision variable $x_{k}^{(n, l)} = 1$ implies that $p_k^{(n)} = l$. Let $K = S + D$ denote the total number of underlay transmitters. The achievable data rate of an underlay UE $u_k$ with the corresponding transmitter $k$ is written as 
\begin{equation} \label{eq:rate_ue}
R_{u_k} = \sum\limits_{n = 1}^{N} \sum\limits_{l = 1}^{L}  ~x_{k}^{(n,l)}   B_{\mathrm {RB}} \log_2 \left(1 +  \gamma_{u_k}^{(n)} \right). 
\end{equation}
The aggregated interference experienced on RB $n$ is given by $I^{(n)} = \sum\limits_{k =1}^{K}\sum\limits_{l = 1}^{L}x_{k}^{(n, l)} g_{k,m_k^*}^{(n)} p_k^{(n)}$, where $m_k^* = \underset{m}{\operatorname{argmax}}~ g_{k,m}^{(n)}, ~\forall m \in \mathcal{U}^{\mathrm m}$. 
In order to calculate the aggregated interference $I^{(n)}$ on RB $n$ we use the concept of reference user \cite{ref_user}. For any RB $n$, the interference caused by the underlay transmitter $k$ is determined by the highest gains between the transmitter $k$ and MUEs, e.g., the MUE $m_k^*$ who is the mostly affected UE by the transmitter $k$. Satisfying the interference constraints considering the gain with reference user will also satisfy the interference constraints for other MUEs. As mentioned in Section \ref{subsec:nw_model}, an underlay transmitter is allowed to use a particular transmission alignment only when it does not violate the interference threshold to the MUEs, i.e., $I^{(n)} < I_{\mathrm{max}}^{(n)}, ~\forall n$. Mathematically, the resource allocation problem can be expressed by using the following optimization formulation:

\begin{myoptimizationproblem} \label{opt:combopt}
\vspace*{-2.0em}
\begin{subequations}
\begin{align}
\hspace{3em} \underset{x_{k}^{(n,l)},~ p_k^{(n)}}{\operatorname{max}} ~ \sum_{\substack{k =1}}^{K}  \sum_{n = 1}^{N} \sum_{l = 1}^{L}  ~x_{k}^{(n,l)}   & B_{\mathrm {RB}}  \log_2\left( 1 + \gamma_{u_k}^{(n)} \right)  \nonumber \\
 \text{subject~ to:} \hspace{7em} \nonumber\\
\sum_{k =1}^{K}\sum_{l = 1}^{L}x_{k}^{(n, l)} g_{k,m_k^*}^{(n)} p_k^{(n)} &< I_{\mathrm{max}}^{(n)}, \quad \forall n \in\mathcal{N} \label{eq:opt_intf}\\
\sum_{n = 1}^{N} \sum_{l = 1}^{L} x_{k}^{(n,l)} &\leq 1, \quad \quad ~~\forall k \in \mathcal{K}^{\mathrm T} \label{eq:opt_rbpw}\\
x_{k}^{(n,l)} &\in \lbrace 0, 1 \rbrace, ~~~ \forall k \in \mathcal{K}^{\mathrm T},~\forall n \in \mathcal{N},~\forall l \in \mathcal{L}  \label{eq:opt_bin}
\end{align}
\end{subequations}
\end{myoptimizationproblem}
where \vspace*{-0.3em} \begin{equation} \label{eq:sinr_formulation}
\gamma_{u_k}^{(n)} = \frac{g_{k, u_k}^{(n)}p_{k}^{(n)}}{ g_{M, u_k}^{(n)}p_{M}^{(n)} + \sum\limits_{\substack{ k^\prime \in \mathcal{K}^{\mathrm{T}},\\ k^\prime \neq k }}^{K} \sum\limits_{l^\prime = 1}^{L} x_{j}^{(n,l^\prime)} g_{k^\prime, u_k}^{(n)} p_{k^\prime}^{(n)} + ~\sigma^2}.
\end{equation} 

The objective of the resource allocation problem $\mathbf{P\ref{opt:combopt}}$ is to maximize the data rate of the SUEs and DUEs subject to the set of constraints given by Equations (\ref{eq:opt_intf})-(\ref{eq:opt_bin}). With the constraint in Equation (\ref{eq:opt_intf}), the aggregated interference caused to the MUEs by the underlay transmitters on each RB is limited by a predefined threshold. The constraint in Equation (\ref{eq:opt_rbpw}) indicates that the number of RB selected by each underlay transmitter should be at most one and each  transmitter can only select one power level at each RB. The binary indicator variable for transmission alignment selection is represented by the constraint in Equation (\ref{eq:opt_bin}).

\begin{corollary}
The resource allocation problem $\mathbf{P\ref{opt:combopt}}$ is a combinatorial non-convex non-linear optimization problem and the centralized solution of the above problem is strongly NP-hard especially for the large set of  ~$\mathcal{U}$, $\mathcal{N}$, and $\mathcal{L}$. 
\end{corollary}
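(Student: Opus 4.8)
The plan is to handle the two halves of the statement separately: first the structural classification, then the strong NP-hardness, which is the substantive claim.

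The structural part I would settle by direct inspection rather than computation. The decision variables $x_{k}^{(n,l)}$ are pinned to $\{0,1\}$ by Equation (\ref{eq:opt_bin}), so the feasible set is discrete and the problem is \emph{combinatorial}. Nonlinearity is immediate from the objective, since every term carries a factor $\log_2(1+\gamma_{u_k}^{(n)})$ and, by Equation (\ref{eq:sinr_formulation}), $\gamma_{u_k}^{(n)}$ is a ratio whose denominator is an affine function of the power variables $p_{k'}^{(n)}$ of the \emph{other} underlay transmitters. This same coupling delivers non-convexity: even under a continuous relaxation of the integrality constraint, a sum of logarithms of ratios of affine functions of the joint variable is neither concave nor convex on the feasible region, and the products $x_{k}^{(n,l)}\log_2(1+\gamma_{u_k}^{(n)})$ together with the binary constraint make the feasible region itself non-convex.

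For the hardness part, the strategy is a polynomial reduction from a \emph{strongly} NP-complete problem, namely \textsc{3-Partition}: given $3m$ positive integers $a_1,\dots,a_{3m}$ with $\sum_i a_i = mB$ and $B/4 < a_i < B/2$, decide whether they split into $m$ triples each summing to $B$. I would map an instance to $\mathbf{P\ref{opt:combopt}}$ with $K=3m$ underlay transmitters, $N=m$ RBs, and a single power level ($L=1$). Each RB gets interference budget $I_{\mathrm{max}}^{(n)}=B+1$, and the gains and the single power level are chosen so that the interference footprint of transmitter $k$ toward its reference MUE equals $a_k$ on every RB, i.e. $g_{k,m_k^*}^{(n)}p_k^{(n)}=a_k$. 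Under the bound $B/4<a_k<B/2$ and integrality, constraint (\ref{eq:opt_intf}) admits at most three transmitters per RB, and since the total footprint is $mB$ across $m$ RBs of budget $B$, assigning all $3m$ transmitters is feasible if and only if the $a_k$ partition into $m$ triples of sum exactly $B$, i.e. if and only if the \textsc{3-Partition} instance is a ``yes'' instance.

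The main obstacle is that, unlike a textbook assignment problem, the objective here is \emph{coupled}: the rate of a transmitter depends through the denominator of $\gamma_{u_k}^{(n)}$ on which other transmitters share its RB, so ``maximize sum rate'' is not literally ``maximize the number of assigned transmitters.'' I would resolve this by exploiting that the constraint gain $g_{k,m_k^*}^{(n)}$ and the co-tier interference gains $g_{k',u_k}^{(n)}$ in the SINR denominator are \emph{independent} parameters of the instance: choosing the cross-interference gains $g_{k',u_k}^{(n)}$ for $k'\neq k$ negligibly small makes every assigned transmitter contribute essentially the same fixed positive rate regardless of co-channel assignments, so maximizing the sum rate becomes equivalent to assigning as many transmitters as possible, with the optimum equal to the all-assigned value exactly on ``yes'' instances. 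Finally, since all magnitudes produced by the reduction ($a_k$, $B$, and the chosen gains and powers) are bounded by a polynomial in the input size, and \textsc{3-Partition} stays NP-complete under such polynomial bounds, the construction certifies that $\mathbf{P\ref{opt:combopt}}$ is \emph{strongly} NP-hard; enlarging $\mathcal{U}$, $\mathcal{N}$, and $\mathcal{L}$ only widens the family of encodable instances.
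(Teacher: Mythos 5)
Your proof is correct in substance, and it supplies something the paper itself does not: the paper states this corollary with no proof at all, following it only with the observation that exhaustive search costs $\mathcal{O}\left( (NL)^K \right)$ — which bounds one particular algorithm and establishes nothing about NP-hardness. Your two-step argument (classification by inspection, then a polynomial reduction from \textsc{3-Partition}) is the standard and appropriate way to actually substantiate the strong NP-hardness claim, and the construction is sound: with $L=1$, footprints $g_{k,m_k^*}^{(n)}p_k^{(n)} = a_k$, and $I_{\mathrm{max}}^{(n)} = B+1$ (which correctly converts the strict inequality in Equation (\ref{eq:opt_intf}) into $I^{(n)} \leq B$ for integer data), the window $B/4 < a_k < B/2$ forces at most three transmitters per RB, and assigning all $3m$ transmitters is feasible if and only if the \textsc{3-Partition} instance is a yes-instance. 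Your recognition that the coupled SINR objective blocks a naive "maximize assignments" reading, and your decoupling of the constraint gains $g_{k,m_k^*}^{(n)}$ from the co-tier gains $g_{k',u_k}^{(n)}$, is exactly the right move.

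One step deserves tightening. "Negligibly small" cross-gains needs a quantitative choice: if each assigned transmitter's rate lies in $[r-\delta,\, r]$, you need $3m(r-\delta) > (3m-1)r$, i.e. $\delta < r/(3m)$, so that every all-assigned allocation strictly beats every partial one; since the rate is Lipschitz in the co-channel interference near zero, cross-gains of order $1/\mathrm{poly}(m)$ suffice. This point matters precisely for the \emph{strong} part of the claim: strong NP-hardness requires all numerical magnitudes in the constructed instance to be polynomially bounded, so you must verify the gains can be taken as rationals with polynomially bounded numerators and denominators — which the Lipschitz estimate delivers — rather than, say, exponentially small quantities that would only give ordinary NP-hardness. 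You should also state the decision version explicitly ("does a feasible allocation achieve sum rate at least $3mr$?"), since NP-hardness is formally a property of decision problems, and note that Equation (\ref{eq:opt_rbpw}) permits unassigned transmitters (it is an inequality, not an equality), which is what makes that threshold test nontrivial. With these details filled in, your argument is a complete proof of a claim the paper leaves as an assertion.
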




The complexity to solve the above problem using exhaustive search is of $\mathcal{O}\left( \left(NL \right)^{K} \right)$. As an example, when $N=6, L=3, $ and $K = 3+2 = 5$, the decision set (e.g., search space) contains $1889568$ possible transmission alignments. 
Considering the computational overhead, it not feasible to solve the resource allocation problem by a single central controller (e.g., MBS) in a practical system; and such centralized solution approach requires all the channel state information (CSI) available to the MBS.  

Due to mathematical intractability of solving the above resource allocation problem, in the following we present three distributed heuristic solution approaches, namely, stable matching, factor graph based message passing, and distributed auction-based approaches. The distributed solutions are developed under the assumption that the system is feasible, i.e., given the resources and parameters (e.g., size of the network, interference thresholds etc.), it is possible to obtain an allocation that satisfies all the constraints of the original optimization problem.

\section{Resource Allocation Using Stable Matching} \label{sec:sm_ra}

The resource allocation approach using stable matching involves multiple decision-making agents, i.e., the available radio resources (transmission alignments) and the underlay transmitters; and the solutions (i.e., matching between transmission alignments and transmitters) are produced by individual actions of the agents. The actions, i.e., matching requests and confirmation or rejection are determined by the given \textit{preference profiles}, i.e., the agents hold lists of preferred matches over the opposite set each. The matching outcome yields mutually beneficial assignments between the transmitters and available resources that are individually conducted by such preference lists. In our model, the preference could based on CSI parameters and achievable SINR. \textit{Stability} in matching implies that, with regard to their initial preferences, neither the underlay transmitters nor the MBS (e.g., transmission alignments) have an incentive to alter the allocation.

\subsection{Concept of Matching}

A \textit{matching} (i.e., allocation) is given as an assignment of transmission alignment to the underlay transmitters forming the set $\lbrace k, n, l \rbrace \in \mathcal{K}^{\mathrm T} \times \mathcal{N} \times \mathcal{L}$. According to our system model, each underlay transmitter is assigned to only one RB; however, multiple transmitters can transmit on the same RB to improve spectrum utilization. This scheme corresponds to a \textit{many-to-one} matching in the theory of stable matching. More formally the matching can be defined as follows \cite{sm_def}:

\begin{definition} \label{def:matching}
A matching $\mu$ is defined as a function, i.e.,  $\mu: \mathcal{K}^{\mathrm T} \times \mathcal{N} \times \mathcal{L} \rightarrow \mathcal{K}^{\mathrm T}  \times \mathcal{N} \times \mathcal{L}$ such that
\begin{enumerate} [label={\roman*)}] 
\setlength{\itemsep}{0.5pt}%
    \setlength{\parskip}{0pt}%
\item $\mu(k) \in \mathcal{N} \times \mathcal{L} $ and $|\mu_l(n)| \in \lbrace 0,1 \rbrace$ \quad \mbox{and}
\item $\mu(n) \in \left\lbrace\mathcal{K}^{\mathrm T} \times \mathcal{L}\right\rbrace \cup \lbrace \varnothing \rbrace$ and $|\mu(n)| \in \lbrace 1, 2, \ldots, K \rbrace$
\end{enumerate}
where $\mu(k) = \lbrace n, l\rbrace \Leftrightarrow \mu(n) = \lbrace k, l\rbrace$ for $ \forall k \in \mathcal{K}^{\mathrm T}, \forall  n \in \mathcal{N}, \forall  l \in \mathcal{L},$ and $|\mu(\cdot)|$ denotes the cardinality of matching outcome $\mu(\cdot)$.
\end{definition}


The above \textbf{Definition \ref{def:matching}} implies that  $\mu$ is a one-to-one matching if the input to the function is an underlay transmitter. On the other hand, $\mu$ is a one-to-many function, i.e., $\mu_l(n)$ is not unique if the input to the function is an RB. The interpretation of $\mu(n) = \varnothing $ implies that for some RB $n \in \mathcal{N}$ the corresponding RB is unused by any underlay transmitter under the matching $\mu$. The outcome of the matching determines the RB allocation vector  and corresponding power level, e.g., $\mu \equiv  \mathbf{X} $, where 
\begin{equation} \label{eq:rap_X}
\mathbf{X} = \left[x_{1}^{(1, 1)}, \cdots, x_{1}^{(1, L)}, \cdots, x_{1}^{(N, L)}, \cdots, x_{K}^{(N, L)} \right]^{\mathsf{T}}.
\end{equation}

\subsection{Utility Function and Preference Profile}

Let the parameter $\Gamma_{u_k}^{(n, l)} \triangleq {\gamma_{u_k}^{(n)}}_{ \!\! \vert p_k^{(n)} = l }$ denote the achievable SINR of the UE $u_k$ over RB $n$ using power level $l$ (e.g., $p_k^{(n)} = l$) where $\gamma_{u_k}^{(n)}$ is given by Equation (\ref{eq:sinr_formulation}). We express the data rate as a function of SINR. In particular, let $\mathscr{R}\left(\Gamma_{u_k}^{(n, l)}\right) =  B_{\mathrm {RB}} \log_2 \left(1 +  \Gamma_{u_k}^{(n, l)} \right)$ denote the achievable data rate for the transmitter $k$ over RB $n$ using power level $l$. The utility of an underlay transmitter for a particular transmission alignment is determined by two factors, i.e., the achievable data rate for a given RB power level combination, and an additional cost function that represents the aggregated interference caused to the MUEs on that RB. In particular, the \textit{utility} of the underlay transmitter $k$ for a given RB $n$ and power level $l$ is given by 
\begin{equation} \label{eq:sm_utility}
\mathfrak{U}_{k}^{(n,l)} = w_1 \mathscr{R}\left(\Gamma_{u_k}^{(n, l)}\right)  -  w_2 \left( I^{(n)} - I_{\mathrm{max}}^{(n)} \right) 
\end{equation}
where $w_1$ and $w_2$ are the biasing factors and can be selected based on which network tier (i.e., macro or underlay tier) should be given priority for resource allocation \cite{prabo_journal}. As mentioned earlier each underlay transmitter and RB hold a list of preferred matches. The preference profile of an underlay transmitter $k \in \mathcal{K}^{\mathrm T}$ over the set of available RBs $\mathcal{N}$ and power levels $\mathcal{L}$ is defined as a vector of linear order $\boldsymbol{\mathscr{P}}_{k}(\mathcal{N}, \mathcal{L}) = \left[\mathfrak{U}_{k}^{(n,l)} \right]_{n \in \mathcal{N}, l \in \mathcal{L}}$. We denote by $\lbrace n_1, l_1 \rbrace \succeq_k  \lbrace n_2, l_2 \rbrace$ that the transmitter $k$ prefers the transmission alignment $\lbrace n_1, l_1 \rbrace$ to $\lbrace n_2, l_2 \rbrace$, and consequently, $\mathfrak{U}_{k}^{(n_1,l_1)} > \mathfrak{U}_{k}^{(n_2,l_2)}$. Similarly, the each RB holds the preference over the underlay transmitters and power levels given by  $\boldsymbol{\mathscr{P}}_{n}(\mathcal{K}^{\mathrm T}, \mathcal{L}) = \left[\mathfrak{U}_{k}^{(n,l)} \right]_{k \in \mathcal{K}^{\mathrm T}, l \in \mathcal{L}}$.

\subsection{Algorithm Development}

The matching between transmission alignments to the transmitters is performed in an iterative manner as presented in \textbf{Algorithm \ref{alg:ta_sm}}. While a transmitter is unallocated and has a non-empty preference list, the transmitter is temporarily assigned to its first preference over transmission alignments, e.g., the pair of RB and power level, $\lbrace n,l \rbrace$. If the allocation to the RB $n$ does not violate the tolerable interference limit $I_{\mathrm{max}}^{(n)}$, the allocation will persist. Otherwise, until the aggregated interference on the RB $n$ is below threshold, the worst preferred transmitter(s) from the preference list of RB $n$ will be removed even though it was allocated previously. The process terminates when  no more transmitters are unallocated. Since the iterative process dynamically updates the preference lists, the procedure above ends up with a local stable matching \cite{matching_org_paper}.

\begin{algorithm} [!t]
\AtBeginEnvironment{algorithmic}{\small} 
\caption{Assignment of transmission alignments using stable matching}
\label{alg:ta_sm}
\begin{algorithmic}[1]   
\renewcommand{\algorithmicrequire}{\textbf{Input:}}
\renewcommand{\algorithmicensure}{\textbf{Output:}}
\renewcommand{\algorithmicforall}{\textbf{for each}}
\renewcommand{\algorithmiccomment}[1]{\textit{/* #1 */}}

\REQUIRE The preference profiles $\boldsymbol{\mathscr{P}}_{k}(\mathcal{N}, \mathcal{L})$,~ $\forall k \in \mathcal{K}^{\mathrm T}$ and $\boldsymbol{\mathscr{P}}_{n}(\mathcal{K}^{\mathrm T}, \mathcal{L})$,~ $\forall n \in \mathcal{N}$.

\ENSURE The transmission alignment indicator $\mathbf{X} = \left[x_{1}^{(1, 1)}, \cdots, x_{1}^{(1, L)}, \cdots, x_{1}^{(N, L)}, \cdots, x_{K}^{(N, L)} \right]^{\mathsf{T}}$.

\STATE Initialize $\mathbf{X} := \mathbf{0}$.

\WHILE{ some transmitter $k$ is unassigned \AND $\boldsymbol{\mathscr{P}}_{k}(\mathcal{N}, \mathcal{L})$ is non-empty }
\STATE $\left\lbrace n_{\mathrm{mp}},l_{\mathrm{mp}} \right\rbrace :=$ most preferred RB with power level $l_{\mathrm{mp}}$ from the profile $\boldsymbol{\mathscr{P}}_{k}(\mathcal{N}, \mathcal{L})$. 
\STATE Set $x_{k}^{\left(n_{\mathrm{mp}},l_{\mathrm{mp}} \right)} := 1$.  ~\COMMENT{\footnotesize Temporarily assign the RB and power level to the transmitter $k$}


\STATE $\mathfrak{I}^{(n_\mathrm{mp})} := g_{k,m_k^*}^{(n_\mathrm{mp})} l_{\mathrm{mp}} + \!\!  \sum\limits_{\substack{ k^\prime \in \mathcal{K}^{\mathrm{T}},\\k^\prime \neq k}}\sum\limits_{l^\prime = 1}^{L}x_{k^\prime}^{(n_\mathrm{mp}, l^\prime)} g_{k^\prime,m_{k^\prime}^*}^{(n_\mathrm{mp})} p_{k^\prime}^{(n_\mathrm{mp})}$. ~\COMMENT{\footnotesize Estimate interference of $n_\mathrm{mp}$} 

\IF{$\mathfrak{I}^{(n_\mathrm{mp})} \geq I_{\mathrm{max}}^{(\mathrm{mp})} $}
\REPEAT  

\STATE $ \left\lbrace k_{\mathrm{lp}}, l_{\mathrm{lp}}  \right\rbrace :=$ least preferred transmitter with power level $l_{\mathrm{lp}}$ assigned to $n_{\mathrm{mp}}$.
\STATE Set $x_{k_{\mathrm{lp}}}^{\left( n_{\mathrm{mp}}, l_{\mathrm{lp}}\right)} := 0$. ~\COMMENT{\footnotesize Revoke assignment due to  interference threshold violation}

\STATE $\mathfrak{I}^{(n_\mathrm{mp})} := \sum\limits_{\substack{ k^\prime =1, }}^{K}\sum\limits_{l^\prime = 1}^{L}x_{k^\prime}^{(n_\mathrm{mp}, l^\prime)} g_{k^\prime,m_{k^\prime}^*}^{(n_\mathrm{mp})} p_{k^\prime}^{(n_\mathrm{mp})}$. ~\COMMENT{\footnotesize Update interference level} 

 { \footnotesize \textit{/* Update preference profiles */} }
\FORALL {successor $\lbrace \hat{k}_{\mathrm{lp}}, \hat{l}_{\mathrm{lp}}  \rbrace$ of $ \left\lbrace k_{\mathrm{lp}}, l_{\mathrm{lp}}  \right\rbrace$ on profile $\boldsymbol{\mathscr{P}}_{n_{\mathrm{mp}}}(\mathcal{K}^{\mathrm T}, \mathcal{L})$} 

\STATE remove $\lbrace \hat{k}_{\mathrm{lp}}, \hat{l}_{\mathrm{lp}}\rbrace$ from $\boldsymbol{\mathscr{P}}_{n_{\mathrm{mp}}}(\mathcal{K}^{\mathrm T}, \mathcal{L})$.
\STATE remove $\left\lbrace n_{\mathrm{mp}}, l_{\mathrm{mp}} \right\rbrace$ from $\boldsymbol{\mathscr{P}}_{\hat{k}_{\mathrm{lp}}}(\mathcal{N}, \mathcal{L})$. 
\ENDFOR

\UNTIL{$\mathfrak{I}^{(n_\mathrm{mp})} < I_{\mathrm{max}}^{(n_\mathrm{mp})}$ }

\ENDIF 

\ENDWHILE

\end{algorithmic}
\end{algorithm}

 The overall stable matching based resource allocation approach is summarized in \textbf{Algorithm \ref{alg:ra_sm}}. Note that \textbf{Algorithm \ref{alg:ta_sm}} is executed repeatedly. The convergence of \textbf{Algorithm \ref{alg:ra_sm}} occurs when the outcome of two consecutive local matching is similar, e.g., $\mathbf{X}(t) = \mathbf{X}(t-1)$ and as a consequence $R(t) = R(t-1)$, where $R(t) = \sum\limits_{k=1}^{K} R_{u_k}(t)$ denotes the achievable sum rate of the underlay tier at iteration $t$.


\begin{algorithm} [!t]
\AtBeginEnvironment{algorithmic}{\small} 
\caption{Stable matching-based resource allocation}
\label{alg:ra_sm}
\begin{algorithmic}[1]   
\renewcommand{\algorithmicrequire}{\textbf{Input:}}
\renewcommand{\algorithmicensure}{\textbf{Output:}}
\renewcommand{\algorithmicforall}{\textbf{for each}}
\renewcommand{\algorithmiccomment}[1]{\textit{/* #1 */}}

\renewcommand{\algorithmicensure}{\textbf{Initialization:}}
\ENSURE

\STATE Estimate the CSI parameters from previous time slot.

\STATE Each underlay transmitter $k \in \mathcal{K}^{\mathrm T}$ randomly selects a transmission alignment and the MBS broadcasts the aggregated interference of each RB using pilot signals.

\STATE Each underlay transmitter $k \in \mathcal{K}^{\mathrm T}$ builds the preference profile $\boldsymbol{\mathscr{P}}_{k}(\mathcal{N}, \mathcal{L})$ from the CSI estimations and the utility function given by Equation  (\ref{eq:sm_utility}).

\STATE For each $n \in \mathcal{N}$,  the MBS builds the preference profiles $\boldsymbol{\mathscr{P}}_{n}(\mathcal{K}^{\mathrm T}, \mathcal{L})$.

\STATE Initialize number of iterations $t := 1$.

\renewcommand{\algorithmicensure}{\textbf{Update:}}
\vspace*{0.5em}
\ENSURE



\WHILE{$\mathbf{X}(t) \neq \mathbf{X}(t-1)$ \AND $t$ is less than some predefined threshold $T_{\mathrm{max}}$}

\STATE MBS obtains a local stable matching $\mathbf{X}(t)$ using \textbf{Algorithm \ref{alg:ta_sm}}, calculates the aggregated interference $I^{(n)}(t)$ for $\forall n$ and informs the transmitters.

\STATE  Each underlay transmitter $k \in \mathcal{K}^{\mathrm T}$ updates the preference profile $\boldsymbol{\mathscr{P}}_{k}(\mathcal{N}, \mathcal{L})$ based on current allocation vector $\mathbf{X}(t)$ and interference level $I^{(n)}(t)$.

\STATE MBS updates the preference profile $\boldsymbol{\mathscr{P}}_{n}(\mathcal{K}^{\mathrm T}, \mathcal{L})$ for $\forall n \in \mathcal{N}$ using  $\mathbf{X}(t)$ and $I^{(n)}(t)$.

\STATE Update $t := t+1$.
\ENDWHILE

\renewcommand{\algorithmicensure}{\textbf{Allocation:}}
\vspace*{0.5em}
\ENSURE

\STATE Allocate the RB and power levels to the SBSs and D2D UEs based on the matching obtained from the update phase. 

\end{algorithmic}
\end{algorithm}

\subsection{Stability, Optimality, and Complexity of the Solution}

In this section, we  analyze the solution obtained by stable matching approach. The stability, optimality, and the complexity of the algorithm are discussed in the following.

\subsubsection{Stability}

The notion of stability in the matching $\mu$ means that none of the agents (e.g., either underlay transmitters or the resources) prefers to change the allocation obtained by $\mu$. Hence, the matching $\mu$ is stable if no transmitter and no resource who are not allocated to each other, 
as given in $\mu$, prefer each other to their allocation in $\mu$. The transmitters and resources are said to be \textit{acceptable} if the agents (e.g., transmitters and resources) prefer each other to remain unallocated. In addition, a matching $\mu$ is called \textit{individually rational} if no agent  $\tilde{\jmath}$ prefers unallocation to the matching in $\mu(\tilde{\jmath})$. Before formally defining the stability of matching, we introduce the term \textit{blocking pair} which is defined as

\begin{definition} \label{def:blocking}
A matching $\mu$ is \textbf{blocked} by a pair of agent $(i,j)$ if they prefer each other to the matching obtain by $\mu$, i.e.,  $i \succeq_{j} \mu(j)$ and $j \succeq_{i} \mu(i)$.
\end{definition}

Using the above definition, the stability of the matching can be defined as follows \cite[Chapter 5]{matchingbook_two}:

\begin{definition}  \label{def:stability}
A matching $\mu$ is \textbf{stable} if it is individually rational and there is no tuple $(k, n, l)$ within the set of acceptable agents such that $k$ prefers $\lbrace n,l \rbrace$ to  $\mu(k)$ and $n$ prefers $\lbrace k,l \rbrace$ to  $\mu(n)$, i.e., not blocked by any pair of agents. 
\end{definition}

The following theorem shows that the solution obtained by the matching algorithm is stable.

\begin{theorem} \label{thm:stability}
The assignment performed in \textbf{Algorithm \ref{alg:ta_sm}} leads to a stable allocation.
\end{theorem}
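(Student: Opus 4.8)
The plan is to argue by contradiction, exploiting the deferred-acceptance structure of \textbf{Algorithm \ref{alg:ta_sm}} in the spirit of the classical Gale--Shapley stability proof, adapted to the interference-limited (variable-capacity) setting. First I would dispatch individual rationality: every preference profile $\boldsymbol{\mathscr{P}}_{k}(\mathcal{N},\mathcal{L})$ and $\boldsymbol{\mathscr{P}}_{n}(\mathcal{K}^{\mathrm T},\mathcal{L})$ lists only acceptable alignments, and the algorithm sets $x_{k}^{(n,l)}:=1$ only for a pair that currently sits on both the transmitter's and the RB's profiles; hence any matched tuple is mutually acceptable and no matched agent prefers unallocation. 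By \textbf{Definition \ref{def:stability}} it then suffices to rule out a blocking tuple.

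Next I would set up the contradiction: suppose the output $\mu$ admits a blocking tuple $(k,n,l)$ in the sense of \textbf{Definition \ref{def:blocking}}, so that $k$ strictly prefers $\lbrace n,l\rbrace$ to $\mu(k)$ and $n$ prefers $\lbrace k,l\rbrace$ to $\mu(n)$. The outer \textbf{while} loop always tentatively assigns $k$ its \emph{most preferred} remaining alignment, and $k$ descends its profile monotonically. Since $k$ terminates holding $\mu(k)$, which it ranks strictly below $\lbrace n,l\rbrace$, the pair $\lbrace n,l\rbrace$ must have been deleted from $\boldsymbol{\mathscr{P}}_{k}(\mathcal{N},\mathcal{L})$ at some earlier iteration --- otherwise $k$ would never have proposed past it. The only deletion of an RB-$n$ alignment from $k$'s profile occurs in the inner rejection loop (lines 12--15) acting on $n$, where $k$ figures as a \emph{successor} of the least-preferred pair removed to restore $\mathfrak{I}^{(n)} < I_{\mathrm{max}}^{(n)}$. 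At that instant, therefore, every transmitter-power pair that $n$ retains is preferred by $n$ to $\lbrace k,l\rbrace$.

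The crux --- and the step I expect to be the main obstacle --- is a monotonicity (``improvement'') lemma for each RB $n$: once $\lbrace k,l\rbrace$ has been rejected by $n$, the tentatively assigned set of $n$ never deteriorates from $n$'s standpoint, so the final $\mu(n)$ consists entirely of pairs that $n$ prefers to $\lbrace k,l\rbrace$. The mutual deletion in lines 13--14 guarantees that a rejected pair can never return, which blocks the obvious failure mode. The subtlety absent from textbook Gale--Shapley is that an RB has no fixed quota: its admissible population is governed by the nonlinear interference budget $I_{\mathrm{max}}^{(n)}$, which shifts as new proposers arrive with different power levels $l$. I would argue that, because each rejection round removes the \emph{least} preferred assigned pair first and halts the moment feasibility is regained, the retained set is always a preference-maximal feasible subset; feeding a new, more-preferred proposer into this set and re-running the removal rule can only displace pairs that $n$ ranks below the newcomer, preserving the invariant that retained pairs dominate all previously rejected ones. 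Granting this lemma, $\mu(n)$ contains only pairs $n$ prefers to $\lbrace k,l\rbrace$, so $n$ does \emph{not} prefer $\lbrace k,l\rbrace$ to $\mu(n)$, contradicting the blocking assumption. Hence no blocking tuple exists and, together with individual rationality, $\mu$ is stable.
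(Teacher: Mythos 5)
Your proof is correct and follows essentially the same route as the paper's: a contradiction via the blocking-pair definition, using the deferred-acceptance structure to argue that transmitter $k$ must have proposed to, and been rejected by, any alignment $\lbrace n,l \rbrace$ it prefers to $\mu(k)$, whence $n$ retains only pairs it prefers to $\lbrace k,l \rbrace$. The only difference is rigor: you explicitly isolate and justify the improvement/monotonicity invariant for the interference-budgeted RB (the variable-capacity analogue of the Gale--Shapley improvement lemma, made safe by the mutual profile deletions), a step the paper's shorter proof uses implicitly without spelling it out.
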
 
\begin{proof}
We proof the theorem by contradiction. Let $\mu$ be a matching obtained by \textbf{Algorithm \ref{alg:ta_sm}}. Let us assume that the resource $\lbrace n, l \rbrace$ is not allocated to the transmitter $k$, but it belongs to a higher order in the preference list. According to this assumption,
the tuple $(k, n, l)$ will block $\mu$. Since the position of the resource $\lbrace n, l \rbrace$ in the preference profile of $k$ is higher compared to any resource $\lbrace \hat{n}, \hat{l} \rbrace$ that is matched by $\mu$, i.e., $\lbrace n, l \rbrace \succeq_{k} \mu(k) $, transmitter $k$ must select $\lbrace n, l \rbrace$ before the algorithm terminates. Note that, the resource $\lbrace n,l \rbrace$ is not assigned to transmitter $k$ in the matching outcome $\mu$. This implies that 
$k$ is unassigned with the resource $\lbrace n , l \rbrace$ (e.g., line 9 in \textbf{Algorithm \ref{alg:ta_sm}}) and $(k, \hat{n}, \hat{l})$ is a better assignment. As a result, the tuple $(k, n, l)$ will not block $\mu$, which contradicts our assumption. The proof concludes since no blocking pair exists, and therefore, the matching outcome $\mu$ leads to a stable matching.
\end{proof}

It is worth mentioning that the assignment is stable at each iteration of \textbf{Algorithm \ref{alg:ta_sm}}. Since after  evaluation of the utility, the preference profiles are updated and the matching subroutine is repeated, a stable allocation is obtained at each iteration.

\subsubsection{Optimality}

The optimality property of the stable matching approach can be observed using the definition of weak Pareto optimality. Let $\mathcal{R}_{\mu}$ denote the sum-rate obtained by matching $\mu$. A matching $\mu$ is weak Pareto optimal if there is no other matching $ \widehat{\mu}$ that can achieve a better sum-rate, i.e., $\mathcal{R}_{\widehat{\mu}} \geq \mathcal{R}_{\mu} $ \cite{sm_def}.

\begin{theorem}
The stable matching-based resource allocation algorithm is weak Pareto optimal.
\end{theorem}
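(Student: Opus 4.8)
The plan is to argue by contradiction, building directly on the stability established in \textbf{Theorem \ref{thm:stability}} and on the greedy, preference-driven structure of \textbf{Algorithm \ref{alg:ta_sm}}. Let $\mu$ be the matching returned by the algorithm, and suppose, contrary to the claim, that $\mu$ is not weak Pareto optimal. Then there exists an alternative matching $\widehat{\mu}$ with strictly larger aggregate rate, $\mathcal{R}_{\widehat{\mu}} > \mathcal{R}_{\mu}$. I would first translate this statement about the sum-rate into a statement about individual transmitters: since $\mathcal{R}_{\mu} = \sum_{k} R_{u_k}$ is additive over the underlay transmitters, a strict increase in the total rate forces at least one transmitter $k$ to receive a strictly larger per-link rate $\mathscr{R}\!\left(\Gamma_{u_k}^{(n,l)}\right)$ under $\widehat{\mu}$ than under $\mu$.

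Next I would connect this rate improvement to the preference relation $\succeq_k$. Because each transmitter's list is ordered by the utility $\mathfrak{U}_{k}^{(n,l)}$ of Equation (\ref{eq:sm_utility}), and because \textbf{Algorithm \ref{alg:ta_sm}} always (temporarily) assigns an unallocated transmitter to the most preferred alignment still on its list, I can argue that $k$ must have proposed to its $\widehat{\mu}$-alignment at some stage: either that alignment was never removed, in which case $k$ would already hold it in $\mu$, or it was discarded through the interference-driven rejection steps. The stability guaranteed by \textbf{Theorem \ref{thm:stability}} rules out the first branch whenever the $\widehat{\mu}$-alignment is genuinely more preferred, forcing the second: the improving alignment for $k$ was dropped precisely because admitting it would drive the aggregated interference $\mathfrak{I}^{(n)}$ to meet or exceed the threshold $I_{\mathrm{max}}^{(n)}$.

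From here I would close the argument through feasibility. Any admissible $\widehat{\mu}$ must itself respect the interference constraint (\ref{eq:opt_intf}) on every RB, so if $k$ occupies its improving alignment under $\widehat{\mu}$ without violating the threshold, some transmitter that shared that RB in $\mu$ must be absent or relocated under $\widehat{\mu}$. I would then bound the rate surrendered by those displaced transmitters and show that, because the algorithm had already resolved each RB in favour of its highest-utility occupants, the rate recovered by $k$ cannot exceed the rate lost elsewhere. This would yield $\mathcal{R}_{\widehat{\mu}} \le \mathcal{R}_{\mu}$, contradicting $\mathcal{R}_{\widehat{\mu}} > \mathcal{R}_{\mu}$.

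The hard part will be this last step: bridging the gap between \emph{per-agent} preference optimality, which stability delivers cleanly, and \emph{global} sum-rate optimality. Two features make it delicate. First, the quantity ordered by the preferences is not the raw rate but the biased combination $w_1 \mathscr{R}\!\left(\Gamma_{u_k}^{(n,l)}\right) - w_2\!\left( I^{(n)} - I_{\mathrm{max}}^{(n)} \right)$, so ``more preferred'' and ``higher rate'' coincide only up to the interference penalty; I would need to control how the $w_2$ term distorts the rate comparison, perhaps by restricting the comparison to the rate component or by fixing a priority choice of $w_1$ and $w_2$. Second, the interference coupling in Equation (\ref{eq:sinr_formulation}) means that relocating one transmitter perturbs the SINR $\gamma_{u_{k^\prime}}^{(n)}$ of every co-channel transmitter, so the displaced-rate bookkeeping is not a clean exchange; I expect to lean on the reference-user interference model and the monotonicity of $\mathscr{R}$ in the SINR to make the net-rate comparison rigorous.
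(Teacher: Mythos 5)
There is a genuine gap, and you have actually flagged it yourself: the closing step of your plan --- bounding the rate recovered by the improving transmitter against the rate lost by the transmitters it displaces --- is not a technical loose end but the entire content of the theorem, and in the generality you set up it is not provable. You compare $\mu$ against \emph{any} admissible matching satisfying the interference constraint (\ref{eq:opt_intf}); that would establish global sum-rate optimality of the stable outcome, which stability simply does not deliver (stable matchings are in general sum-rate suboptimal, and your own two caveats --- the $w_2$ distortion between preference and raw rate, and the SINR coupling in Equation (\ref{eq:sinr_formulation}) --- are precisely why no clean exchange bookkeeping exists). The paper's proof quantifies over a strictly smaller set: it takes $\widehat{\mu}$ to be an arbitrary \emph{stable} outcome with better sum-rate, which is what weak Pareto optimality means in this matching context. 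With that restriction the argument never touches rate accounting at all.

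The second missing idea is the resource side of the blocking pair. You correctly extract, via additivity, a transmitter $k$ that strictly improves under $\widehat{\mu}$ and hence prefers $\lbrace \hat{n}, \hat{l} \rbrace$ to its $\mu$-assignment $\lbrace n, l \rbrace$; but you then route the argument through the rejection mechanics of \textbf{Algorithm \ref{alg:ta_sm}} instead of completing the blocking pair of \textbf{Definition \ref{def:blocking}}. The paper's proof also asserts the complementary preference $\lbrace k, l \rbrace \succeq_{\hat{n}} \lbrace \hat{k}, \hat{l} \rbrace$, where $\hat{k}$ holds $\lbrace \hat{n}, \hat{l} \rbrace$ under $\mu$, so that the tuple $(k, \hat{n}, \hat{l})$ blocks $\mu$ and \textbf{Theorem \ref{thm:stability}} yields the contradiction in two lines. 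Once both sides of the preference are in hand, stability does all the work and the interference bookkeeping you anticipated never arises. (To be fair, the paper's own justification of the resource-side preference is terse --- it is declared ``obvious'' --- but that is the step your proposal would need to supply in place of the displaced-rate bound, not in addition to it.)
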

\begin{proof}
Let us consider $\mu$ to be the stable allocation obtained by \textbf{Algorithm \ref{alg:ta_sm}}. For instance, let $\widehat{\mu}$ be an arbitrary stable outcome better that $\mu$, i.e., $\widehat{\mu}$ can achieve a better sum-rate. Since the allocation $\widehat{\mu}$ is better than $\mu$, there exists atleast one resource $\lbrace \hat{n}, \hat{l} \rbrace$ allocated to transmitter $k$ in $\widehat{\mu}$, and $k$ is allocated to the resource $\lbrace n, l \rbrace$ in $\mu$. According to our assumption, $k$ prefers $\lbrace \hat{n}, \hat{l} \rbrace$ to $\lbrace n,l \rbrace$, and let $\lbrace \hat{n}, \hat{l} \rbrace$ be allocated to transmitter $\hat{k}$ in $\mu$. It is obvious that resource  $\lbrace \hat{n}, \hat{l} \rbrace$ is better than $\lbrace n,l \rbrace$ to transmitter $k$ and $\lbrace k, l \rbrace$ is better than $\lbrace \hat{k}, \hat{l} \rbrace$ to resource $\hat{n}$, i.e., $\lbrace \hat{n}, \hat{l} \rbrace \succeq_k \lbrace n,l \rbrace$ 
and $\lbrace k, l \rbrace \succeq_{\hat{n}} \lbrace  \hat{k}, \hat{l} \rbrace$. 
By the definition of blocking pair, $\mu$ is blocked by $(k, \hat{n}, \hat{l})$ and hence $\mu$ is unstable. This contradicts our assumption that $\mu$ is a stable allocation. Since there is no stable outcome $\widehat{\mu}$ which is better that $\mu$, by definition  $\mu$ is an optimal allocation.
\end{proof}

\subsubsection{Complexity}

It is possible to show that the stable matching algorithm will iterate for finite number of times. 

\begin{theorem}
\label{thm:sm_time}
The RB allocation subroutine terminates after some finite step $T^\prime$.
\end{theorem}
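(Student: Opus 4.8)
The plan is to establish termination via a monotone potential argument, the standard technique for deferred-acceptance (Gale--Shapley type) procedures. I would take as potential the total number of alignment entries remaining across all transmitters' preference profiles,
\[
\Phi \;=\; \sum_{k \in \mathcal{K}^{\mathrm T}} \bigl| \boldsymbol{\mathscr{P}}_{k}(\mathcal{N}, \mathcal{L}) \bigr|,
\]
which is a non-negative integer bounded above by $KNL$ at initialization. The key structural observation is that the preference profiles are touched only in the inner \textbf{for each} block (lines 13--14 of \textbf{Algorithm \ref{alg:ta_sm}}), and that block \emph{only removes} entries and never inserts them. Hence $\Phi$ is non-increasing throughout the execution and can strictly decrease only finitely many times.

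First I would dispatch the inner \textbf{repeat}--\textbf{until} loop. Each pass revokes the least-preferred transmitter currently assigned to $n_{\mathrm{mp}}$, which deletes one non-negative power-weighted interference term and therefore strictly decreases $\mathfrak{I}^{(n_{\mathrm{mp}})}$. Since at most $K$ transmitters can be simultaneously assigned to a single RB and the system is assumed feasible, after finitely many revocations the aggregated interference falls below $I_{\mathrm{max}}^{(n_{\mathrm{mp}})}$ and the loop exits. Thus every entry into the interference-repair block completes in a bounded number of steps.

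Next I would bound the outer \textbf{while} loop. Each iteration issues one tentative proposal: the unassigned transmitter $k$ is assigned to its current top choice $\lbrace n_{\mathrm{mp}}, l_{\mathrm{mp}} \rbrace$. I would classify an iteration as \emph{productive} if this tentative assignment is eventually revoked (returning $k$ to the unassigned pool) and \emph{unproductive} otherwise. Whenever a tentative assignment is revoked, the corresponding alignment is pruned from the relevant transmitter's profile, so that transmitter can never re-propose it; consequently each productive iteration forces a strict decrease of $\Phi$. Because $\Phi$ starts at most $KNL$ and cannot fall below zero, there are at most $KNL$ productive iterations, and between them the set of permanently assigned transmitters only grows, so unproductive iterations are likewise finite. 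Combining the two bounds, the outer loop executes at most $\mathcal{O}(KNL)$ times, each invocation triggering an inner loop of at most $K$ revocations with $\mathcal{O}(KL)$ pruning work, so the subroutine halts after a finite number of steps $T^\prime$.

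The step I expect to be the main obstacle is the pruning claim inside the outer-loop bound, namely that a revoked tentative assignment genuinely leaves the proposing transmitter's preference list so that no alignment is ever re-proposed. The pseudocode prunes the \emph{successors} of the revoked pair on the profile $\boldsymbol{\mathscr{P}}_{n_{\mathrm{mp}}}(\mathcal{K}^{\mathrm T}, \mathcal{L})$, and one must argue carefully that the revoked pair's own alignment is removed from $\boldsymbol{\mathscr{P}}_{k}(\mathcal{N}, \mathcal{L})$ as well; otherwise a transmitter could in principle cycle on the same alignment. The clean way to close this gap is to lean on the potential argument directly: any infinite run would require either an infinite inner loop (excluded above) or infinitely many productive outer iterations, and each of the latter removes at least one entry from a finite reservoir of at most $KNL$ entries --- a contradiction. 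This makes the termination conclusion robust to the precise bookkeeping of which entries are pruned.
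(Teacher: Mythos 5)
Your core argument coincides with the paper's proof: the paper disposes of the theorem in two sentences by noting that the set $\tilde{\mathcal{X}}$ of all transmitter--resource combinations is finite and that no transmitter is rejected by the same resource more than once, which is precisely your potential $\Phi \leq KNL$ in compressed form. Your inner-loop bound and the productive/unproductive classification are extra scaffolding around the same finite-reservoir idea, and under the paper's reading of the algorithm the argument is sound.

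The problem is your closing paragraph. You correctly spot that lines 11--14 of \textbf{Algorithm \ref{alg:ta_sm}} prune only the \emph{successors} of the revoked pair, so the revoked transmitter's own alignment may survive in both profiles --- but your proposed repair is circular, because the potential argument needs exactly the claim you placed in doubt: that every revocation strictly decreases $\Phi$. If the revoked pair $\lbrace k_{\mathrm{lp}}, l_{\mathrm{lp}} \rbrace$ is the last entry on $\boldsymbol{\mathscr{P}}_{n_{\mathrm{mp}}}(\mathcal{K}^{\mathrm T}, \mathcal{L})$ --- and it becomes the last entry after a single pass, since all of its successors are deleted --- then a subsequent revocation removes nothing, $\Phi$ is unchanged, and there is a genuine cycle: $k_{\mathrm{lp}}$ re-proposes its unchanged top choice, is tentatively re-assigned, the interference test fails again, and it is revoked again with no successors left to prune, returning the system to the identical state. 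So the assertion ``any infinite run requires infinitely many productive iterations, each removing an entry from a finite reservoir'' is false under the literal reading of the pseudocode: an infinite run with zero removals is possible. The paper closes this hole not with a more robust potential but by fiat --- its proof simply asserts that no transmitter is rejected by the same resource more than once, i.e., it reads the rejection step as deleting the revoked pair itself (as in the standard hospitals/residents algorithm, where the rejected pair is deleted along with its successors). With that inclusive-deletion convention your $\Phi$ argument is airtight and essentially identical to the paper's; without it, no termination proof can succeed, because the algorithm genuinely fails to terminate on the cycle above. The correct fix is to state the convention explicitly, not to claim the conclusion is robust to the bookkeeping.
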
 
\begin{proof}
Let the finite set $\tilde{\mathcal{X}}$ represent the all possible combinations of transmitter-resource matching where each element  $\tilde{x}_{k}^{(n,l)} \in \tilde{\mathcal{X}}$ denotes the resource $\lbrace n,l \rbrace$ is allocated to the transmitter $k$. Since no transmitter is rejected by the same resource more than once (i.e., line 9 in \textbf{Algorithm \ref{alg:ta_sm}}), the finiteness of the set $\tilde{\mathcal{X}}$ ensures the termination of the matching subroutine in finite number of steps.
\end{proof}

For each underlay transmitter, the complexity to build the preference profile  using any standard sorting algorithm is $\mathcal{O}\left( NL \log( NL) \right)$ (line 8, \textbf{Algorithm \ref{alg:ra_sm}}). Similarly, in line 9, the complexity to output the ordered set of preference profile for the RBs is of  $\mathcal{O}\left( N KL \log (KL)  \right)$. Let $\xi = \displaystyle \sum_{k = 1 }^{K} |\boldsymbol{\mathscr{P}}_{k}(\mathcal{N}, \mathcal{L}) | + \sum_{n = 1}^{N} |\boldsymbol{\mathscr{P}}_{n}(\mathcal{K}^{\mathrm T}, \mathcal{L}) |  = 2 K N L$ be the total length of input preferences in \textbf{Algorithm \ref{alg:ta_sm}}, where $|\boldsymbol{\mathscr{P}}_j(\cdot)|$ denotes the length of profile vector $\boldsymbol{\mathscr{P}}_j(\cdot)$. From \textbf{Theorem \ref{thm:sm_time}} and \cite[Chapter 1]{matching_thesis} it can be shown that, if implemented with suitable data structures, the time complexity of the RB allocation subroutine is linear in the size of input preference profiles, i.e., $\mathcal{O}(\xi) = \mathcal{O}\left( K N L \right)$. Since the update phase of \textbf{Algorithm \ref{alg:ra_sm}} runs at most fixed $T < T_{\mathrm{max}}$ iterations, the complexity of the stable matching-based solution is linear in $K, N, L$.

\section{Message Passing Approach for Resource Allocation} \label{sec:mp_ra}

In the following, we reformulate the resource allocation problem $\mathbf{P \ref{opt:combopt}}$ in such a way that can be solved with a message passing (MP) technique. The MP approach involves computation of the \textit{marginals}, e.g., the messages exchanged between the nodes of a specific graphical model. Among different representations of graphical model, we consider \textit{factor graph} based MP scheme. A factor graph is made up of two different types of nodes, i.e., \textit{function} and \textit{variable} nodes, and an edge connects a function (e.g., factor) node to a variable node if and only if the variable appears in the function. Mathematically, this can be expressed as follows \cite{factorgraph_thory}: 

\begin{definition}
A factor graph can be represented by a $\mathcal{V}$-$\mathcal{F}$ bipartite graph where $\mathcal{V} = \left\lbrace v_1, \cdots v_a \right\rbrace$ is the set of variable nodes and $\mathcal{F} = \left\lbrace f_1(\cdot), \cdots f_b(\cdot) \right\rbrace$ is the set of function (e.g., factor) nodes. The connectivity (e.g., edges) of the factor graph can be represented by an $a \times b$ binary matrix $\mathbf{E} = [E_{i,j}]$ where $E_{i,j} = 1$ if the variable node $i$ is connected with the factor node $j$ and $E_{i,j} = 0$, otherwise.
\end{definition}

\subsection{Overview of the MP Scheme} \label{subsec:mp_overview}

Before presenting the details resource allocation approach for a heterogeneous scenario, we briefly introduce the generic MP scheme (for the details of factor graph based MP scheme refer to \cite{factorgraph_thory}). Let us consider the maximization of an arbitrary function $f(v_1, \cdots, v_J)$ over all possible values of the argument, i.e., $Z = \underset{\mathbf{v}}{\operatorname{max}} ~ f(\mathbf{v})$ where $\mathbf{v} = \left[ v_1, \cdots, v_J \right]^{\mathsf{T}}$. We denote by $\underset{\mathbf{v}}{\operatorname{max}}$ that the maximization is computed over all possible combinations of the elements of the the vector $\mathbf{v}$. The \textit{marginal} of $Z$ with respect to variable $v_j$ is given by $\phi_j(v_j) =  \underset{\sim (v_j)}{\operatorname{max}} ~f(\mathbf{v})$ where $\underset{\sim (\cdot)}{\operatorname{max}}$ denote the maximization over all variables except $(\cdot)$. Let us now decompose $f(\mathbf{v})$ into summation of $I$ functions, i.e., $\sum\limits_{i=1}^{I} f_{i}(\hat{v}_i)$ where $\hat{v}_i$ is a subset of the elements of the vector $\mathbf{v}$ and let $\mathbf{f}= \left[ f_{1}(\cdot), \cdots, f_{I}(\cdot)  \right]^{\mathsf{T}}$ is the vector of $I$ functions. In addition, let  $\mathfrak{f}_j$ represents subset of functions in $\mathbf{f}$ where the variable $v_j$ appears. Hence the marginal can be rewritten as $\phi_j(v_j) = \underset{\sim (v_j)}{\operatorname{max}} ~\sum\limits_{i=1}^{I} f_{i}(\hat{v}_i)$. According to the \textit{max-sum} MP strategy the message passed by any variable node $v_j$ to any generic function node $f_i(\cdot)$ is given by $\delta_{v_j \rightarrow f_i(\cdot)}(v_j) = \sum\limits_{i^\prime \in \mathfrak{f}_j, i^\prime \neq i} \delta_{f_{i^\prime}(\cdot)  \rightarrow v_j }(v_j)$. Similarly, the message from function node $f_i{(\cdot)}$ to variable node $v_j$ is given as $\delta_{f_i(\cdot) \rightarrow v_j}(v_j) = \underset{\sim (v_j)}{\operatorname{max}} \left(f_i(v_1, \cdots, v_J) + \sum\limits_{j^\prime \in \hat{v}_i, j^\prime \neq j} \delta_{v_{j^\prime}   \rightarrow f_{i}(\cdot)}(v_{j^\prime}) \right) $. When the factor graph is cycle free (e.g., there is a unique path connecting any two nodes), all the variables nodes $j = \lbrace 1, \cdots, J \rbrace$ can compute the marginals as $\phi_j(v_j) = \sum\limits_{i=1}^{I} \delta_{f_{i}(\cdot)  \rightarrow v_j }(v_j)$. Utilizing the general distributive law (e.g., $\operatorname{\max} \sum = \sum  \operatorname{\max}$) \cite{mp_distributive} the maximization therefore can be computed as $Z = \sum\limits_{j=1}^{J} \underset{v_j}{\operatorname{max}} ~\phi_j(v_j)$.

\subsection{Reformulation of the Resource Allocation Problem Utilizing MP Approach}

In order to solve the resource allocation problem $\mathbf{P \ref{opt:combopt}}$ presented in Section \ref{subsec:rap_org} using MP, we reformulate it as a utility maximization problem. Let us define the reward functions $\mathfrak{W}_n(\mathbf{X})$ and   $\mathfrak{R}_k(\mathbf{X})$  where the transmission alignment vector $\mathbf{X}$ is given by Equation (\ref{eq:rap_X}).  With the constraint in Equation (\ref{eq:opt_intf}), we can define $\mathfrak{W}_n(\mathbf{X})$ as follows:
\begin{equation} \label{eq:util_mp1}
\mathfrak{W}_n(\mathbf{X}) = 
\begin{cases}  0, & \text{if~} \sum\limits_{k =1}^{K}\sum\limits_{l = 1}^{L}x_{k}^{(n, l)} g_{k,m_k^*}^{(n)} p_k^{(n)} < I_{\mathrm{max}}^{(n)} \\ 
- \infty,  & \text{otherwise.}\end{cases}
\end{equation}
Similarly to deal with the constraint in Equation (\ref{eq:opt_rbpw})  we define $\mathfrak{R}_k(\mathbf{X})$ as
\begin{equation} \label{eq:util_mp2}
\mathfrak{R}_k(\mathbf{X}) = 
\begin{cases}  \sum\limits_{n = 1}^{N} \sum\limits_{l = 1}^{L}  ~x_{k}^{(n,l)}    B_{\mathrm {RB}}  \log_2\left( 1 + \gamma_{u_k}^{(n)} \right) & \text{if~} \sum\limits_{n = 1}^{N} \sum\limits_{l = 1}^{L} x_{k}^{(n,l)} \leq 1 \\ 
- \infty  & \text{otherwise.}\end{cases}
\end{equation}
The interpretations of the reward functions in Equations (\ref{eq:util_mp1}) and (\ref{eq:util_mp2}) are straightforward. Satisfying the interference constraint in Equation (\ref{eq:opt_intf}) does not cost any penalty (e.g., zero reward) in the function $\mathfrak{W}_n(\mathbf{X})$, and in the function $\mathfrak{R}_k(\mathbf{X})$ fulfillment of the RB requirement constraint in Equation (\ref{eq:opt_rbpw}) gives the desired data rate. However, both in the functions $\mathfrak{W}_n(\mathbf{X})$ and $\mathfrak{R}_k(\mathbf{X})$, the unfulfilled constraints, respectively, given by in Equations (\ref{eq:opt_intf}) and (\ref{eq:opt_rbpw}), result in infinite cost.  

From the Equations (\ref{eq:util_mp1}) and (\ref{eq:util_mp2}), the resource allocation problem $\mathbf{P \ref{opt:combopt}}$ can be rewritten as
$$\underset{\mathbf{X}}{\operatorname{max}} \left( \sum\limits_{n=1}^{N} \mathfrak{W}_n(\mathbf{X}) + \sum\limits_{k=1}^{K} \mathfrak{R}_k(\mathbf{X})  \right)$$
and the optimal transmission allocation vector is therefore given by 
\begin{equation} \label{eq:mp_Xall}
\mathbf{X}^* = \underset{\mathbf{X}}{\operatorname{argmax}} \left( \sum\limits_{n=1}^{N} \mathfrak{W}_n(\mathbf{X}) + \sum\limits_{k=1}^{K} \mathfrak{R}_k(\mathbf{X})  \right).
\end{equation}
Since our goal is to obtain a distributed solution for the above resource allocation problem, we focus on a single transmission alignment allocation variable, e.g., $x_{k}^{(n,l)}$. From Equation (\ref{eq:mp_Xall}) we obtain ${x_{k}^{(n,l)}}^* = \underset{x_{k}^{(n,l)}} {\operatorname{argmax}}~ \phi_{k}^{(n,l)}\big(x_{k}^{(n,l)}\big)$ where the marginal $\phi_{k}^{(n,l)} \big(x_{k}^{(n,l)} \big)$ is given by
\begin{equation} \label{eq:marginal_mp}
\phi_{k}^{(n,l)}\big(x_{k}^{(n,l)}\big) = \underset{\sim \bigl(x_{k}^{(n,l)} \bigl)}{\operatorname{max}}  \left( \sum\limits_{n=1}^{N} \mathfrak{W}_n(\mathbf{X}) + \sum\limits_{k=1}^{K} \mathfrak{R}_k(\mathbf{X})  \right).
\end{equation} 
As mentioned in the previous section, $\underset{\sim \bigl(x_{k}^{(n,l)} \bigl)}{\operatorname{max}}$ denote the maximization over all variables in $\mathbf{X}$ except $x_{k}^{(n,l)}$. The marginalization in Equation (\ref{eq:marginal_mp}) can be computed in a distributed way where each node conveys the solution of a local problem to one another by passing information messages according to the max-sum MP strategy. Note that according to our system model the underlay transmitters and the resources (e.g., transmission alignments) can form a bipartite graph, e.g., each transmission alignment $\lbrace n,l \rbrace$ can be assigned to any of the $K$ transmitters as long as interference to the MUEs on RB $n$ is below threshold. Without loss of generality, let us consider a generic transmission alignment, e.g., RB-power level pair $\lbrace n,l \rbrace \in \mathcal{N} \times \mathcal{L}$ and an underlay transmitter $k \in \mathcal{K}^{\mathrm T}$. Using the function in Equation (\ref{eq:util_mp1})  and utilizing the max-sum MP strategy presented in Section \ref{subsec:mp_overview}, it is possible to show that the message delivered by the resource $\lbrace n,l \rbrace$ to the transmitter $k$ can be expressed as \cite{min-sum-mp}
\begin{equation} \label{eq:mp_msg1}
\begin{aligned}
\delta_{\lbrace n,l \rbrace \rightarrow k} \big( x_{k}^{(n,l)} \big) = \operatorname{max} \sum\limits_{k^\prime \in \mathcal{K}^{\mathrm T},~ k^\prime \neq k}  \delta_{k^\prime \rightarrow \lbrace n,l \rbrace} \big( x_{k^\prime}^{(n,l)} \big) \\
\text{subject to:~~ } \sum\limits_{k =1}^{K}\sum\limits_{l = 1}^{L}x_{k}^{(n, l)} g_{k,m_k^*}^{(n)} p_k^{(n)} < I_{\mathrm{max}}^{(n)}.
\end{aligned}
\end{equation}
Note that the term $\delta_{k \rightarrow \lbrace n,l \rbrace} \big( x_{k}^{(n,l)} \big)$ in the above equation denotes the message from transmitter $k$ to the resource $\lbrace n,l \rbrace$ which can be written as \cite{min-sum-mp}
\begin{equation} \label{eq:mp_msg2}
\begin{aligned}
\delta_{k \rightarrow \lbrace n,l \rbrace} \big( x_{k}^{(n,l)} \big) = x_{k}^{(n,l)}    R_{u_k}^{(n,l)} + \operatorname{max}  \sum\limits_{\substack{\lbrace n^\prime, l^\prime \rbrace \in \mathcal{N} \times \mathcal{L} \\ n^\prime \neq n,~ l^\prime \neq l }} x_{k}^{(n^\prime,l^\prime)} &   R_{u_k}^{(n^\prime,l^\prime)} + \delta_{\lbrace n^\prime,l^\prime \rbrace \rightarrow k} \big( x_{k}^{(n^\prime,l^\prime)} \big) \\
\text{subject to:~~ } \sum\limits_{n = 1}^{N} \sum\limits_{l = 1}^{L} x_{k}^{(n,l)} &\leq 1
\end{aligned}
\end{equation}
where $R_{u_k}^{(n,l)} =  B_{\mathrm {RB}}  \log_2\left( 1 + \Gamma_{k}^{(n,l)} \right)$ and $\Gamma_k^{(n, l)} \triangleq {\gamma_{u_k}^{(n)}}_{ \!\! \vert p_k^{(n)} = l }$.

The interpretation of the Equations (\ref{eq:mp_msg1}) and  (\ref{eq:mp_msg2}) are as follows: the messages $\delta_{\lbrace n,l \rbrace \rightarrow k} ( 1 )$ and $\delta_{k \rightarrow \lbrace n,l \rbrace} (1)$ carry the information relative to the use of the resource $\lbrace n, l\rbrace$ by the transmitter $k$; while the messages $\delta_{\lbrace n,l \rbrace \rightarrow k} (0)$ and $\delta_{k \rightarrow \lbrace n,l \rbrace} (0)$ carry the information relative to the lack of transmission over the resource $\lbrace n, l\rbrace$ by the transmitter $k$. In order to obtain both the messages $\delta_{\lbrace n,l \rbrace \rightarrow k} \big( x_{k}^{(n,l)} \big) $ and $\delta_{k \rightarrow \lbrace n,l \rbrace} \big( x_{k}^{(n,l)} \big) $, it is required to solve the local optimization problem relative to the allocation variable $ x_{k}^{(n,l)} $.

Based on the discussions of Section \ref{subsec:mp_overview}, the link-wise marginal in Equation (\ref{eq:marginal_mp}) can be written as \cite{min-sum-mp}
\begin{equation} \label{eq:mp_marginal_modified}
\phi_{k}^{(n,l)}\big(x_{k}^{(n,l)}\big) = \delta_{\lbrace n,l \rbrace \rightarrow k} \big( x_{k}^{(n,l)} \big) +  \delta_{k \rightarrow \lbrace n,l \rbrace} \big( x_{k}^{(n,l)} \big) 
\end{equation}
and hence the transmission allocation variable is given by 
\begin{equation} \label{eq:mp_allocation_upd}
{x_{k}^{(n,l)}}^* = \underset{x_{k}^{(n,l)}} {\operatorname{argmax}}~ \phi_{k}^{(n,l)}\big(x_{k}^{(n,l)}\big).
\end{equation}
At each iteration of the MP-based resource allocation algorithm, at most one message passes through the edge of any given direction (e.g., from transmitters to resources or from resources to transmitters); and each iteration the messages are updated by replacing the previous message sent on the same edge in the same direction \cite{min-sum-mp}. When both the messages given by Equations (\ref{eq:mp_msg1}) and (\ref{eq:mp_msg2}) are available, the marginal can be computed using Equation (\ref{eq:mp_marginal_modified}) and the transmission allocation variable is obtained by Equation (\ref{eq:mp_allocation_upd}).

\subsection{Effective Implementation of MP Scheme in a Practical Heterogeneous Network} \label{subsec:mp_effective}

It is worth noting that, sending messages from resources to transmitters (and vice versa) requires actual transmission on the radio channel. In a practical LTE-A-based 5G system, since the exchange of messages actually involves effective transmissions over the channel, the MP scheme described in the preceding section might be limited by the signaling overhead  due to transfer of messages between the transmitters and resources. In the following, we observe that the amount of message signaling can be significantly reduced by some algebraic manipulations. Since the messages carry the information regarding whether any resource is used by any underlay transmitter, each transmitter $k$ actually delivers a real valued vector with two element, i.e., $\boldsymbol{\delta}_{k \rightarrow \lbrace n,l \rbrace} = \left[ \delta_{k \rightarrow \lbrace n,l \rbrace} (1),~ \delta_{k \rightarrow \lbrace n,l \rbrace} (0) \right]^{\mathsf{T}}$ and each resource $\lbrace n, l\rbrace$ delivers the vector $\boldsymbol{\delta}_{\lbrace n,l \rbrace \rightarrow k} = \left[ \delta_{\lbrace n,l \rbrace \rightarrow k} (1),~ \delta_{\lbrace n,l \rbrace \rightarrow k} (0) \right]^{\mathsf{T}}$. Let us now rewrite the message $\delta_{k \rightarrow \lbrace n,l \rbrace} \big( x_{k}^{(n,l)} \big)$ using the utility function introduced in Equation (\ref{eq:sm_utility}) as follows:

\begin{equation} \label{eq:mp_msg2_wUtil}
\delta_{k \rightarrow \lbrace n,l \rbrace} \big( x_{k}^{(n,l)} \big) = x_{k}^{(n,l)}    \mathfrak{U}_{k}^{(n,l)} + \operatorname{max}  \sum\limits_{\substack{\lbrace n^\prime, l^\prime \rbrace \in \mathcal{N} \times \mathcal{L} \\ n^\prime \neq n,~ l^\prime \neq l }} x_{k}^{(n^\prime,l^\prime)}  \mathfrak{U}_{k}^{(n^\prime,l^\prime)} + \delta_{\lbrace n^\prime,l^\prime \rbrace \rightarrow k} \big( x_{k}^{(n^\prime,l^\prime)} \big). 
\end{equation}
By subtracting the constant term $\sum\limits_{\substack{\lbrace n^\prime, l^\prime \rbrace \in \mathcal{N} \times \mathcal{L} \\ n^\prime \neq n,~ l^\prime \neq l }} \delta_{\lbrace n^\prime,l^\prime \rbrace \rightarrow k} (0) $ from the both sides of Equation (\ref{eq:mp_msg2_wUtil}) we can obtain the following:
\begin{equation} \label{eq:mp_msg2_wUtil_sub}
\begin{aligned}
& \delta_{k \rightarrow \lbrace n,l \rbrace} \big( x_{k}^{(n,l)} \big) - \sum\limits_{\substack{\lbrace n^\prime, l^\prime \rbrace \in \mathcal{N} \times \mathcal{L} \\ n^\prime \neq n,~ l^\prime \neq l }} \delta_{\lbrace n^\prime,l^\prime \rbrace \rightarrow k} (0) = x_{k}^{(n,l)}    \mathfrak{U}_{k}^{(n,l)} ~~+\\
&   \operatorname{max}  \sum\limits_{\substack{\lbrace n^\prime, l^\prime \rbrace \in \mathcal{N} \times \mathcal{L} \\ n^\prime \neq n,~ l^\prime \neq l }} x_{k}^{(n^\prime,l^\prime)}  \mathfrak{U}_{k}^{(n^\prime,l^\prime)} + \delta_{\lbrace n^\prime,l^\prime \rbrace \rightarrow k} \big( x_{k}^{(n^\prime,l^\prime)} \big) - \delta_{\lbrace n^\prime,l^\prime \rbrace \rightarrow k} (0).
\end{aligned}
\end{equation}

Let us now introduce the parameter $\psi_{\lbrace n,l \rbrace \rightarrow k} = \delta_{\lbrace n,l \rbrace \rightarrow k} (1) - \delta_{\lbrace n,l \rbrace \rightarrow k} (0)$ defined as the normalized message.  For instance, consider the vector $$\Psi_{k} = \left[ \mathfrak{U}_{k}^{(1,1)} + \psi_{\lbrace 1,1 \rbrace \rightarrow k}, \cdots, \mathfrak{U}_{k}^{(1,L)} + \psi_{\lbrace 1,L \rbrace \rightarrow k}, \cdots, \mathfrak{U}_{k}^{(N,L)} + \psi_{\lbrace N,L \rbrace \rightarrow k} \right]^{\mathsf{T}}$$ and let us denote by $\left\langle \upsilon_{\lbrace n^\prime,l^\prime \rbrace \rightarrow k} \right\rangle_{\sim \lbrace n,l \rbrace}$  the maximal entry of the vector $\Psi_{k}$ without considering the term $\mathfrak{U}_{k}^{(n,l)} + \psi_{\lbrace n,l \rbrace \rightarrow k}$. It can be noted that the terms within the summation in Equation (\ref{eq:mp_msg2_wUtil_sub}) are either $0$ (e.g., when $x_{k}^{(n,l)} = 0$) or $\mathfrak{U}_{k}^{(n^\prime,l^\prime)} + \psi_{\lbrace n^\prime,l^\prime \rbrace \rightarrow k}$ (e.g., when $x_{k}^{(n,l)} = 1$).  Since each transmitter requires only a single transmission alignment, when the variable  $x_{k}^{(n,l)} = 0$, only one term in the summation of Equation (\ref{eq:mp_msg2_wUtil_sub}) is non-zero. For the case $x_{k}^{(n,l)} = 1$,  no term within the summation of Equation (\ref{eq:mp_msg2_wUtil_sub}) is non-zero. Consequently, for $x_{k}^{(n,l)} = 0$, the maximum rate will be achieved if 
\begin{equation} \label{eq:mp_msg_wSort1}
\delta_{k \rightarrow \lbrace n,l \rbrace} (0) - \sum\limits_{\substack{\lbrace n^\prime, l^\prime \rbrace \in \mathcal{N} \times \mathcal{L} \\ n^\prime \neq n,~ l^\prime \neq l }} \delta_{\lbrace n^\prime,l^\prime \rbrace \rightarrow k} (0) =  \left\langle \upsilon_{\lbrace n^\prime,l^\prime \rbrace \rightarrow k} \right\rangle_{\sim \lbrace n,l \rbrace}.
\end{equation}
Similarly, when $x_{k}^{(n,l)} = 1$, the maximum is given by
\begin{equation} \label{eq:mp_msg_wSort2}
\delta_{k \rightarrow \lbrace n,l \rbrace} (1) - \sum\limits_{\substack{\lbrace n^\prime, l^\prime \rbrace \in \mathcal{N} \times \mathcal{L} \\ n^\prime \neq n,~ l^\prime \neq l }} \delta_{\lbrace n^\prime,l^\prime \rbrace \rightarrow k} (0) = \mathfrak{U}_{k}^{(n,l)}.
\end{equation}
Since by definition $\psi_{ k \rightarrow \lbrace n,l \rbrace } = \delta_{k \rightarrow \lbrace n,l \rbrace} (1) - \delta_{k \rightarrow \lbrace n,l \rbrace} (0)$, from the Equations (\ref{eq:mp_msg_wSort1}) and  (\ref{eq:mp_msg_wSort2}), the normalized messages from the transmitter $k$ to the resource $\lbrace n, l\rbrace$ can be derived as
\begin{align} \label{eq:mp_msg_norm1}
\psi_{ k \rightarrow \lbrace n,l \rbrace } &= \mathfrak{U}_{k}^{(n,l)} - \left\langle \upsilon_{\lbrace n^\prime,l^\prime \rbrace \rightarrow k} \right\rangle_{\sim \lbrace n,l \rbrace} \nonumber \\
&= \mathfrak{U}_{k}^{(n,l)} - \left\langle \mathfrak{U}_{k}^{(n^\prime,l^\prime)} + \psi_{\lbrace n^\prime,l^\prime \rbrace \rightarrow k} \right\rangle_{\sim \lbrace n,l \rbrace}.
\end{align}
Likewise, from \cite{min-sum-mp}, it can be shown that the normalized message sent from the resource $\lbrace n, l\rbrace$ to the transmitter $k$ becomes 
\begin{equation} \label{eq:mp_msg_norm2}
\psi_{\lbrace n,l \rbrace \rightarrow k} = \delta_{\lbrace n,l \rbrace \rightarrow k} (1) - \delta_{\lbrace n,l \rbrace \rightarrow k} (0) = - \underset{k^\prime \in \mathcal{K}^{\mathrm T}, k^\prime \neq k}{\operatorname{max}} \psi_{ k^\prime \rightarrow \lbrace n,l \rbrace }. 
\end{equation}
For any arbitrary graph, the allocation variables may keep oscillating and might not converge to any fixed point, and the MP scheme may require some heuristic approach to terminate. However, in the context of loopy graphical models, by introducing a suitable weight, the messages given by Equations (\ref{eq:mp_msg_norm1}) and (\ref{eq:mp_msg_norm2}) perturb to a fixed point \cite{min-sum-mp, remp_proof}. Accordingly, Equations (\ref{eq:mp_msg_norm1}) and (\ref{eq:mp_msg_norm2}) can be rewritten as \cite{min-sum-mp}
\begin{align}
\psi_{ k \rightarrow \lbrace n,l \rbrace } &= \mathfrak{U}_{k}^{(n,l)} - \omega \left\langle \mathfrak{U}_{k}^{(n^\prime,l^\prime)} + \psi_{\lbrace n^\prime,l^\prime \rbrace \rightarrow k} \right\rangle_{\sim \lbrace n,l \rbrace} - (1-\omega) \left( \mathfrak{U}_{k}^{(n,l)} + \psi_{\lbrace n,l \rbrace \rightarrow k} \right) \label{eq:mp_norm_msg_w1} \\
\psi_{\lbrace n,l \rbrace \rightarrow k} &= - \omega \underset{k^\prime \in \mathcal{K}^{\mathrm T}, k^\prime \neq k}{\operatorname{max}} \psi_{ k^\prime \rightarrow \lbrace n,l \rbrace } - (1-\omega)~ \psi_{ k \rightarrow \lbrace n,l \rbrace } \label{eq:mp_norm_msg_w2}
\end{align}
where $\omega \in (0, 1]$ denotes the weighting factor for each edge. Notice that when $\omega = 1$, the messages given by Equations (\ref{eq:mp_norm_msg_w1}) and (\ref{eq:mp_norm_msg_w2}) reduce to the original formulation, e.g., Equations (\ref{eq:mp_msg_norm1}) and (\ref{eq:mp_msg_norm2}), respectively. Given the normalized messages $\psi_{ k \rightarrow \lbrace n,l \rbrace }$ and $\psi_{\lbrace n,l \rbrace \rightarrow k} $ for $\forall k, n, l$, the node marginals for the normalized messages can be calculated as $\tau_{k}^{(n,l)} = \psi_{ k \rightarrow \lbrace n,l \rbrace } + \psi_{\lbrace n,l \rbrace \rightarrow k} $ and hence from Equation (\ref{eq:mp_allocation_upd}) the transmission alignment allocation can be obtained as
\begin{equation} \label{eq:mp_X_finally}
{x_{k}^{(n,l)}}^* = 
\begin{cases} 
1 & \text{if } \tau_{k}^{(n,l)} > 0 \text{ and } I^{(n)} < I_{\mathrm{max}}^{(n)}\\
0 & \text{otherwise.} \\
 \end{cases}
\end{equation}

\subsection{Algorithm Development}

In line with our discussions and from the expressions derived in Section \ref{subsec:mp_effective}, the 
MP-based resource allocation approach is outlined in \textbf{Algorithm \ref{alg:mp_rec_alloc}}. The underlay transmitters and the resources (e.g., MBS)  exchange the messages in an iterative manner. The MBS assigns the resource to the transmitters considering the node marginals, as well as the interference experienced on the RBs. The algorithm terminates when the sum data rate is reached to a steady value, i.e., the allocation vector $\mathbf{X}$  remains the same in successive iterations.

\begin{algorithm} [!t]
\caption{Resource allocation using message passing}
\label{alg:mp_rec_alloc}
\begin{algorithmic}[1]   
\AtBeginEnvironment{algorithmic}{\small} 
\renewcommand{\algorithmicforall}{\textbf{for each}}
\renewcommand{\algorithmiccomment}[1]{\textit{/* #1 */}}

\renewcommand{\algorithmicensure}{\textbf{Initialization:}}
\ENSURE

\STATE Estimate the CSI parameters from previous time slot.
\STATE Each underlay transmitter $k \in \mathcal{K}^{\mathrm T}$ selects a transmission alignment randomly and reports to MBS.
\STATE Initialize $t:= 1, ~\psi_{ k \rightarrow \lbrace n,l \rbrace }(0) := 0, ~\psi_{\lbrace n,l \rbrace \rightarrow k}(0) := 0$ for $\forall k, n, l$.

\renewcommand{\algorithmicensure}{\textbf{Update:}}
\vspace*{0.5em}
\ENSURE



\WHILE{$\mathbf{X}(t) \neq \mathbf{X}(t-1)$ \AND $t$ less than some predefined threshold $T_{\mathrm{max}}$}

\STATE Each underlay transmitter $k \in \mathcal{K}^{\mathrm T}$ sends the message 
\[\resizebox{0.93\textwidth}{!}{ $\psi_{ k \rightarrow \lbrace n,l \rbrace }(t) = \mathfrak{U}_{k}^{(n,l)}(t-1) - \omega \left\langle \mathfrak{U}_{k}^{(n^\prime,l^\prime)}(t-1) + \psi_{\lbrace n^\prime,l^\prime \rbrace \rightarrow k}(t-1) \right\rangle_{\sim \lbrace n,l \rbrace} - (1-\omega) \left( \mathfrak{U}_{k}^{(n,l)}(t-1) + \psi_{\lbrace n,l \rbrace \rightarrow k }(t-1) \right)$ }\]
for $\forall \lbrace n, l \rbrace \in \mathcal{N}\times \mathcal{L}$ to the MBS.

\STATE For all the resource $\forall \lbrace n, l\rbrace \in \mathcal{N} \times \mathcal{L}$, MBS sends messages  $$\psi_{\lbrace n,l \rbrace \rightarrow k}(t) = - \omega \underset{k^\prime \in \mathcal{K}^{\mathrm T}, k^\prime \neq k}{\operatorname{max}} \psi_{ k^\prime \rightarrow \lbrace n,l \rbrace }(t-1) - (1-\omega)~ \psi_{ k \rightarrow \lbrace n,l \rbrace } (t-1) $$ to each underlay transmitter $k \in \mathcal{K}^{\mathrm T}$.

\STATE Each underlay transmitter $k \in \mathcal{K}^{\mathrm T}$ computes the marginals as $\tau_{k}^{(n,l)}(t) = \psi_{ k \rightarrow \lbrace n,l \rbrace }(t) + \psi_{\lbrace n,l \rbrace \rightarrow k}(t) $ for $\forall \lbrace n, l \rbrace \in \mathcal{N} \times \mathcal{L}$ and reports to the MBS. 


\vspace{2pt}
{\footnotesize \textit{/* MBS calculates the allocation vector according to Equation (\ref{eq:mp_X_finally}) */} }
\vspace{2pt}

\STATE Set $x_{k}^{(n,l)} := 0$ for $\forall k, n, l$  \hspace{1em} \COMMENT{\footnotesize Initialize the variable to obtain final allocation}

\FORALL{$k \in \mathcal{K}^{\mathrm T} \text{ and } \lbrace n, l \rbrace \in \mathcal{N} \times \mathcal{L}$} 
\IF{$\tau_{k}^{(n,l)}(t) > 0 $} 

\STATE Set $x_{k}^{(n,l)} := 1$. ~~\COMMENT{\footnotesize Assign the resource to the transmitter} 


\STATE $\mathfrak{I}^{(n)} :=  \sum\limits_{\substack{ k^\prime =1}}^{K}\sum\limits_{l^\prime = 1}^{L}x_{k^\prime}^{(n, l^\prime)} g_{k^\prime,m_{k^\prime}^*}^{(n)} p_{k^\prime}^{(n)}$. ~\COMMENT{\footnotesize Calculate interference in RB $n$}

\IF{ $\mathfrak{I}^{(n)} \geq I_{\mathrm{max}}^{(n)}$ }

\REPEAT 

\STATE $\lbrace \hat{k}, \hat{l} \rbrace := \!\!\! \underset{k^\prime \in \mathcal{K}^{\mathrm T}, l^\prime \in \mathcal{L}}{\operatorname{argmax}}  x_{k^\prime}^{(n, l^\prime)} g_{k^\prime,m_{k^\prime}^*}^{(n)} p_{k^\prime}^{(n)}$ \COMMENT{\footnotesize Most interfering transmitter $\hat{k}$ with $p_{\hat{k}}^{(n)} = \hat{l}$ } 

\STATE Set $x_{\hat{k}}^{(n,\hat{l})} := 0$.  ~~~\COMMENT{\footnotesize Unassigned due to interference threshold violation} 

\STATE $\mathfrak{I}^{(n)} :=  \sum\limits_{\substack{ k^\prime =1}}^{K}\sum\limits_{l^\prime = 1}^{L}x_{k^\prime}^{(n, l^\prime)} g_{k^\prime,m_{k^\prime}^*}^{(n)} p_{k^\prime}^{(n)}$. ~~~\COMMENT{\footnotesize Update interference level} 

\UNTIL{$\mathfrak{I}^{(n)} < I_{\mathrm{max}}^{(n)}$}

\ENDIF

\ENDIF
\ENDFOR

\STATE MBS calculates the transmission alignment allocation vector $\mathbf{X}(t) = \left[ x_{k}^{(n,l)} \right]_{\forall k, n, l}$ for the iteration $t$.



\STATE Update $t:= t + 1$.

\ENDWHILE

\renewcommand{\algorithmicensure}{\textbf{Allocation:}}
\vspace*{0.5em}
\ENSURE

\STATE  Allocate the transmission alignments (e.g., RB and power levels) to the SBSs and D2D transmitters. 

\end{algorithmic}
\end{algorithm}

\subsection{Convergence, Optimality, and Complexity of the Solution}

The convergence, optimality, and complexity of the message passing approach is analyzed in the following subsections.

\subsubsection{Convergence and Optimality}

As presented in the following theorem, the message passing algorithm converges to fixed messages  within fixed number of iterations.

\begin{theorem}
The marginals and the allocation in \textbf{Algorithm \ref{alg:mp_rec_alloc}} converge to a fixed point.
\end{theorem}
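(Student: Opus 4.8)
The plan is to recast the synchronous message update of \textbf{Algorithm \ref{alg:mp_rec_alloc}} as the iteration of a single operator on a finite-dimensional normed space, and then obtain a fixed point from the Banach contraction principle; convergence of the marginals and of the allocation will follow as corollaries. First I would collect every normalized message into one vector $\boldsymbol{\psi} = \big( \psi_{k \rightarrow \lbrace n,l \rbrace},\, \psi_{\lbrace n,l \rbrace \rightarrow k} \big)_{\forall k,n,l} \in \mathbb{R}^{M}$, where $M$ is the number of directed edges of the bipartite transmitter--resource graph, and equip $\mathbb{R}^{M}$ with the sup-norm $\| \cdot \|_{\infty}$. The update steps of the algorithm, namely Equations (\ref{eq:mp_norm_msg_w1}) and (\ref{eq:mp_norm_msg_w2}), then define a map $\mathcal{T}: \mathbb{R}^{M} \rightarrow \mathbb{R}^{M}$ with $\boldsymbol{\psi}(t) = \mathcal{T}\big( \boldsymbol{\psi}(t-1) \big)$. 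A useful preliminary remark is that within a single invocation of the algorithm the utilities $\mathfrak{U}_{k}^{(n,l)}$ are fixed constants, depending only on the CSI estimated during initialization; since they enter both arguments of $\mathcal{T}$ identically --- standalone and as common shifts inside the maxima --- they drop out of the difference $\mathcal{T}(\boldsymbol{\psi}_1) - \mathcal{T}(\boldsymbol{\psi}_2)$ that controls the Lipschitz constant.

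The central step is to bound the Lipschitz constant of $\mathcal{T}$. I would repeatedly invoke the non-expansiveness of the maximum, i.e. $\big| \max_i a_i - \max_i b_i \big| \le \max_i | a_i - b_i |$, which applies equally to the leave-one-out maximum $\langle \cdot \rangle_{\sim \lbrace n,l \rbrace}$ of Equation (\ref{eq:mp_norm_msg_w1}) and to the restricted maximum $\max_{k' \neq k}$ of Equation (\ref{eq:mp_norm_msg_w2}). Writing out $\mathcal{T}(\boldsymbol{\psi}_1) - \mathcal{T}(\boldsymbol{\psi}_2)$ coordinate by coordinate, the resource-to-transmitter update contributes a factor $\omega$ through its max term and a factor $(1-\omega)$ through its self term, and symmetrically for the transmitter-to-resource update. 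Composing the two half-updates and tracking how the coefficients $\omega$ and $(1-\omega)$ propagate along the bipartite edges, I aim to show $\| \mathcal{T}(\boldsymbol{\psi}_1) - \mathcal{T}(\boldsymbol{\psi}_2) \|_{\infty} \le c\, \| \boldsymbol{\psi}_1 - \boldsymbol{\psi}_2 \|_{\infty}$ for some $c < 1$ whenever $\omega \in (0,1)$ --- precisely the stabilizing role of the weighting factor noted after Equation (\ref{eq:mp_norm_msg_w2}). Banach's theorem then gives a unique fixed point $\boldsymbol{\psi}^{\ast}$ with $\boldsymbol{\psi}(t) \to \boldsymbol{\psi}^{\ast}$ geometrically.

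It then remains to transfer convergence of the messages to the marginals and the allocation. Since $\tau_{k}^{(n,l)}(t) = \psi_{k \rightarrow \lbrace n,l \rbrace}(t) + \psi_{\lbrace n,l \rbrace \rightarrow k}(t)$ is an affine, hence continuous, function of $\boldsymbol{\psi}(t)$, the marginals converge to limits $\tau_{k}^{(n,l)\ast}$. Provided these limiting marginals are bounded away from zero --- the generic situation for real-valued CSI --- the sign test $\tau_{k}^{(n,l)} > 0$ of Equation (\ref{eq:mp_X_finally}) is eventually constant, so the pre-pruning indicator vector stabilizes; and because the interference-pruning loop (the \textbf{repeat}--\textbf{until} block) is a deterministic function of that indicator vector, the final vector $\mathbf{X}(t)$ is itself eventually constant. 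This yields $\mathbf{X}(t) = \mathbf{X}(t-1)$ after finitely many iterations, which is exactly the termination condition of \textbf{Algorithm \ref{alg:mp_rec_alloc}}.

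I expect the contraction estimate of the second paragraph to be the main obstacle. Bounding each half-update naively gives a factor $\omega + (1-\omega) = 1$, so a crude argument only shows $\mathcal{T}$ is non-expansive --- consistent with the fact that at $\omega = 1$ the messages may oscillate without converging. Extracting a strict contraction $c < 1$ must exploit that the two maxima exclude the self-index (the conditions $k' \neq k$ and the leave-one-out $\langle \cdot \rangle_{\sim \lbrace n,l \rbrace}$), which decouples the damped self term from the propagated term so that the $(1-\omega)$ damping genuinely reduces the norm. Making this cancellation rigorous, possibly after passing to a suitably weighted norm and relying on the reweighted min-sum convergence results of \cite{min-sum-mp, remp_proof}, is the crux of the proof.
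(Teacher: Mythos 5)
Your proposal takes essentially the same route as the paper: both recast the damped updates in Equations (\ref{eq:mp_norm_msg_w1}) and (\ref{eq:mp_norm_msg_w2}) as a single operator on the stacked message vector, assert it is a contraction under the sup-norm by appealing to the reweighted message-passing results of \cite{remp_proof}, and conclude via the unique-fixed-point property of contraction mappings (the paper merely composes the two half-updates into one map acting on the transmitter-to-resource messages alone, which is cosmetic). If anything, you are more explicit than the paper at the two delicate points it glosses over --- that the naive estimate yields only non-expansiveness so the strict modulus $c<1$ must come from the cited results, and that passing from converged marginals to a fixed allocation $\mathbf{X}$ requires the limiting $\tau_{k}^{(n,l)}$ to be nonzero together with the determinism of the interference-pruning step.
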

\begin{proof}
The proof is constructed by utilizing the concept of \textit{contraction mapping} \cite[Chapter 3]{mp_converge}. Let the vector $\boldsymbol{\psi}(t) = \left[\psi_{ 1 \rightarrow \lbrace 1,1 \rbrace }(t), \cdots,  \psi_{ k \rightarrow \lbrace n,l \rbrace }(t), \cdots \psi_{ K \rightarrow \lbrace N,L \rbrace }(t) \right]^{\mathrm T}$ represent all the messages exchanged between the transmitters and the resources (e.g., MBS) at  iteration $t$. Let us consider the messages are translated into the mapping $\boldsymbol{\psi}(t+1) = \mathbb{T}\left( \boldsymbol{\psi}(t) \right) = \left[ \mathbb{T}_{1}^{(1,1)}\left( \boldsymbol{\psi}(t) \right), \cdots, \mathbb{T}_{K}^{(N,L)}\left( \boldsymbol{\psi}(t) \right) \right]^{\mathrm{T}}$. From the Equations (\ref{eq:mp_norm_msg_w1}) and (\ref{eq:mp_norm_msg_w2}) we can obtain $\psi_{ k \rightarrow \lbrace n,l \rbrace }(t+1) = \mathbb{T}_{k}^{(n,l)}\left( \boldsymbol{\psi}(t) \right)$ as follows:
\begin{align}
\mathbb{T}_{k}^{(n,l)}\left( \boldsymbol{\psi}(t) \right) = \omega \left( \mathfrak{U}_{k}^{(n,l)}(t) -  \mathfrak{U}_{k}^{(n^\prime,l^\prime)}(t)\right)~ + \nonumber \\ 
\omega \left( \omega \underset{k^\prime \in \mathcal{K}^{\mathrm T}, k^\prime \neq k}{\operatorname{max}} \psi_{ k^\prime \rightarrow \lbrace n^\prime,l^\prime \rbrace }(t) + (1-\omega) \psi_{ k \rightarrow \lbrace n^\prime,l^\prime \rbrace }(t)  \right)~ + \nonumber \\ 
(1- \omega) \left( \omega \underset{k^\prime \in \mathcal{K}^{\mathrm T}, k^\prime \neq k}{\operatorname{max}} \psi_{ k^\prime \rightarrow \lbrace n,l \rbrace }(t) + (1-\omega) \psi_{ k \rightarrow \lbrace n,l \rbrace }(t)   \right).
\end{align}
For any vector $\mathbf{u}$ and $\mathbf{v}$, any generic mapping $\mathbb{T}$ is a contraction if ${\parallel \mathbb{T} (\mathbf{u}) - \mathbb{T}( \mathbf{v}) \parallel}_{\infty} \leq \varepsilon {\parallel \mathbf{u} - \mathbf{v} \parallel}_{\infty}$, where $\varepsilon < 1$ is the modulus of the mapping \cite[Chapter 3]{mp_converge}. From \cite{remp_proof}, it can be shown that the mapping $\mathbb{T} : \mathbb{R}^{KNF} \rightarrow \mathbb{R}^{KNF}$ is a contraction under the maximum norm, e.g., ${\parallel \mathbb{T}\left( \boldsymbol{\psi} \right) \parallel}_{\infty} = \underset{k \in \mathcal{K}^{\mathrm T}, n \in \mathcal{N}, l \in \mathcal{L}}{\operatorname{max}} |\mathbb{T}_{k}^{(n,l)}\left( \boldsymbol{\psi} \right)|$. Since the contraction mappings have a unique fixed point convergence property for any initial vector, the proof concludes with that fact that message passing algorithm converges to a fixed marginal and hence to a fixed allocation vector $\mathbf{X}$.
\end{proof}

The following theorem presents the fixed convergence point of the message passing algorithm is an optimal solution of the original resource allocation problem.

\begin{theorem}
The allocation obtained by message passing algorithm converges to the optimal solution of resource allocation problem $\mathbf{P\ref{opt:combopt}}$.
\end{theorem}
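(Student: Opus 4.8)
The plan is to prove optimality in two stages: first show that the reformulated objective is an exact encoding of $\mathbf{P\ref{opt:combopt}}$, and then show that the converged messages decode the global maximizer of that objective. Write $F(\mathbf{X}) = \sum_{n=1}^{N} \mathfrak{W}_n(\mathbf{X}) + \sum_{k=1}^{K} \mathfrak{R}_k(\mathbf{X})$. By the definitions in Equations (\ref{eq:util_mp1}) and (\ref{eq:util_mp2}), $F(\mathbf{X})$ is finite if and only if $\mathbf{X}$ satisfies both the interference constraint (\ref{eq:opt_intf}) and the single-alignment constraint (\ref{eq:opt_rbpw}); on that feasible set $F(\mathbf{X})$ equals the underlay sum rate $\sum_k R_{u_k}$, while every infeasible configuration is assigned value $-\infty$. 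Hence $\operatorname{argmax}_{\mathbf{X}} F(\mathbf{X})$ coincides with the optimal solution set of $\mathbf{P\ref{opt:combopt}}$, and I may work with the unconstrained maximization of $F$ in place of the original program.

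The second stage connects the converged messages to the marginals of $F$. By the max-sum development of Section \ref{subsec:mp_overview}, and in particular the general distributive law $\operatorname{max}\sum = \sum \operatorname{max}$ used there, the messages (\ref{eq:mp_msg1})--(\ref{eq:mp_msg2}) are precisely those whose sum yields the link-wise marginal $\phi_{k}^{(n,l)}\big(x_{k}^{(n,l)}\big) = \operatorname{max}_{\sim(x_{k}^{(n,l)})} F(\mathbf{X})$ of Equation (\ref{eq:marginal_mp}). The normalization leading to Equations (\ref{eq:mp_msg_norm1}) and (\ref{eq:mp_msg_norm2}) only subtracts terms independent of $x_{k}^{(n,l)}$, so the node marginal $\tau_{k}^{(n,l)} = \psi_{k \rightarrow \lbrace n,l \rbrace} + \psi_{\lbrace n,l \rbrace \rightarrow k}$ preserves the sign structure of $\phi_{k}^{(n,l)}$; concretely $\tau_{k}^{(n,l)} > 0$ exactly when the branch $x_{k}^{(n,l)} = 1$ attains the marginal maximum. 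Invoking the preceding theorem, the reweighted recursion converges to a fixed point, and the decision rule (\ref{eq:mp_X_finally}) then sets $x_{k}^{(n,l)} = 1$ precisely on those links, i.e. it returns $\operatorname{argmax}_{x_{k}^{(n,l)}} \phi_{k}^{(n,l)}$ for every $(k,n,l)$. Combined with the first stage, the decoded vector $\mathbf{X}^\ast$ is a maximizer of $F$, hence an optimal solution of $\mathbf{P\ref{opt:combopt}}$.

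The main obstacle is that the exactness argument underlying the second stage is only unconditional on a cycle-free factor graph, whereas the transmitter--resource bipartite graph here is loopy: every RB is shared through the interference constraint (\ref{eq:opt_intf}) by all $K$ transmitters, which creates cycles. The crux is therefore to certify that the loopy, reweighted max-sum fixed point still coincides with the true maximizer of $F$. I would close this gap by appealing to the reweighting/perturbation guarantees of \cite{min-sum-mp, remp_proof}: the edge weight $\omega \in (0,1]$ is chosen so that the reweighted marginals furnish an upper bound on $F$ that is tight at the fixed point, whence the decoded assignment attains $\operatorname{max}_{\mathbf{X}} F(\mathbf{X})$. The remaining delicate point is to verify that the interference-driven unassignment loop inside Algorithm \ref{alg:mp_rec_alloc} (the steps that reset $x_{\hat{k}}^{(n,\hat{l})} := 0$) never removes a link belonging to the optimal feasible $\mathbf{X}^\ast$: since that loop only deletes alignments that would push $\mathfrak{I}^{(n)}$ above $I_{\mathrm{max}}^{(n)}$ --- configurations already valued at $-\infty$ by $\mathfrak{W}_n$ in Equation (\ref{eq:util_mp1}) --- it can only discard infeasible candidates and hence leaves the feasible maximizer intact.
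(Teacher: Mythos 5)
Your first stage is correct and coincides with the paper's own setup: $F(\mathbf{X})=\sum_{n}\mathfrak{W}_n(\mathbf{X})+\sum_{k}\mathfrak{R}_k(\mathbf{X})$ is finite exactly on the feasible set of $\mathbf{P\ref{opt:combopt}}$ (by Equations (\ref{eq:util_mp1})--(\ref{eq:util_mp2})) and equals the sum rate there, so maximizing $F$ is equivalent to solving $\mathbf{P\ref{opt:combopt}}$. The genuine gap is in your second stage. As you yourself flag, on the loopy transmitter--resource graph the fixed-point messages are \emph{not} the exact max-marginals of $F$, so the decoding rule ``$\tau_{k}^{(n,l)}>0$ iff the branch $x_{k}^{(n,l)}=1$ attains the marginal maximum'' is precisely the step that cannot be taken for granted --- and your repair is a bare appeal to \cite{min-sum-mp, remp_proof} asserting that for $\omega\in(0,1]$ the reweighted marginals furnish an upper bound on $F$ that is tight at the fixed point. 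That property is not available in the generality you invoke it (reweighted/tree-reweighted bounds require particular edge weights and additional fixed-point conditions before tightness and hence decodability of the optimum can be concluded), and it is not the mechanism the paper uses. The paper's proof is a direct contradiction argument at the fixed point: supposing the MP output $\widetilde{\mathbf{X}}$ differs from an optimum $\mathbf{X}^*$ on a set $\widetilde{\mathcal{N}}\times\widetilde{\mathcal{L}}$ of resources, it notes that each transmitter $\ddot{\kappa}_{\lbrace \tilde{n},\tilde{l}\rbrace}$ holding the resource in $\mathbf{X}^*$ but not in $\widetilde{\mathbf{X}}$ must have node marginal $\tau_{\ddot{\kappa}_{\lbrace \tilde{n},\tilde{l}\rbrace}}^{(\tilde{n},\tilde{l})}<0$, expands these marginals using the weighted updates (\ref{eq:mp_norm_msg_w1})--(\ref{eq:mp_norm_msg_w2}), sums over the differing set to obtain $\sum\tau=\omega\left(\Delta\mathfrak{U}+\Delta\psi\right)$ with $\Delta\mathfrak{U}$ the utility gap between $\mathbf{X}^*$ and $\widetilde{\mathbf{X}}$, and then (with algebra deferred to \cite{remp_proof}) reaches $\frac{2(1-\omega)}{\omega}\sum\tau\leq\Delta\mathfrak{U}$, whose sign structure contradicts non-optimality. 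Your proposal never performs any such comparison of the two allocations at the fixed point, which is the actual content of the optimality proof; without it (or a fully verified tightness certificate for the reweighted scheme) the assertion ``the decoded vector is a maximizer of $F$'' is unsupported.

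A secondary weakness: your closing claim that the interference-driven removal loop in \textbf{Algorithm \ref{alg:mp_rec_alloc}} ``can only discard infeasible candidates and hence leaves the feasible maximizer intact'' is circular. If the thresholded marginals decode an infeasible configuration on some RB $n$, the loop removes the most interfering link there, and nothing in your argument shows the removed link is not one that $\mathbf{X}^*$ retains --- feasibility of (\ref{eq:opt_intf}) is a joint property of all links sharing the RB, not of any single link, so ``the removed link belonged to an infeasible configuration'' does not imply ``the removed link is absent from $\mathbf{X}^*$.'' The loop is guaranteed harmless only if the decoded allocation is already feasible, which presupposes exactly the optimality you are trying to establish; the paper avoids this issue by reasoning about the fixed-point marginals directly rather than about the per-iteration repair step.
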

\begin{proof}
The theorem is proved by contradiction. Let us consider that the solution $\widetilde{\mathbf{X}}$ obtained by message passing algorithm is  not optimal and let $\mathbf{X}^*$ be the optimal solution obtained by solving $\mathbf{P\ref{opt:combopt}}$. Let us further assume that there are $\chi \leq |\mathbf{X}|$ entries (e.g., allocations) that differ between $\widetilde{\mathbf{X}}$ and  $\mathbf{X}^*$. In addition, let $\widetilde{\mathcal{N}} \times \widetilde{\mathcal{L}} \subseteq \mathcal{N} \times \mathcal{L}$ denote the subset of resources for which two allocations differ. For each $\lbrace \tilde{n}, \tilde{l} \rbrace \in \widetilde{\mathcal{N}} \times \widetilde{\mathcal{L}}$ there is a transmitter $\kappa_{\lbrace \tilde{n}, \tilde{l} \rbrace}$ such that $\tilde{x}_{\kappa_{\lbrace \tilde{n}, \tilde{l} \rbrace}}^{(\tilde{n}, \tilde{l})} = 1$ and $x_{\kappa_{\lbrace \tilde{n}, \tilde{l} \rbrace}}^{*(\tilde{n}, \tilde{l})} = 0$, and a transmitter $\ddot{\kappa}_{\lbrace \tilde{n}, \tilde{l} \rbrace} \neq \kappa_{\lbrace \tilde{n}, \tilde{l} \rbrace}$ such that $\tilde{x}_{\ddot{\kappa}_{\lbrace \tilde{n}, \tilde{l} \rbrace}}^{(\tilde{n}, \tilde{l})} = 0$ and $x_{\ddot{\kappa}_{\lbrace \tilde{n}, \tilde{l} \rbrace}}^{*(\tilde{n}, \tilde{l})} = 1$. Hence, the assignment of resource  $\lbrace \tilde{n}, \tilde{l} \rbrace $ to transmitter $\kappa_{\lbrace \tilde{n}, \tilde{l} \rbrace}$ implies that the marginal $\tau_{\ddot{\kappa}_{\lbrace \tilde{n}, \tilde{l} \rbrace}}^{(\tilde{n}, \tilde{l})} < 0$ and the following set of inequalities hold for each $\lbrace \tilde{n}, \tilde{l} \rbrace \in \widetilde{\mathcal{N}} \times \widetilde{\mathcal{L}}$:
\begin{align}
\tau_{\ddot{\kappa}_{\lbrace \tilde{n}, \tilde{l} \rbrace}}^{(\tilde{n}, \tilde{l})} = \omega \left[ \left( \mathfrak{U}_{\ddot{\kappa}_{\lbrace \tilde{n}, \tilde{l} \rbrace}}^{(\tilde{n}, \tilde{l})} + \psi_{\lbrace \tilde{n}, \tilde{l} \rbrace \rightarrow \ddot{\kappa}_{\lbrace \tilde{n}, \tilde{l} \rbrace} } \right) - \left( \mathfrak{U}_{\ddot{\kappa}_{\lbrace \tilde{n}, \tilde{l} \rbrace}}^{(n^\prime, l^\prime)} + \psi_{\lbrace n^\prime, l^\prime \rbrace \rightarrow \ddot{\kappa}_{\lbrace \tilde{n}, \tilde{l} \rbrace} } \right) \right] < 0
\end{align}
where $\lbrace n^\prime, l^\prime \rbrace$ is the resource as represented in Equation (\ref{eq:mp_msg_norm1}). According to our assumption, the resource  $\lbrace n^\prime, l^\prime \rbrace$ also belongs to $\widetilde{\mathcal{N}} \times \widetilde{\mathcal{L}}$. Hence, $\sum\limits_{\lbrace \tilde{n}, \tilde{l} \rbrace \in \widetilde{\mathcal{N}} \times \widetilde{\mathcal{L}}}\tau_{\ddot{\kappa}_{\lbrace \tilde{n}, \tilde{l} \rbrace}}^{(\tilde{n}, \tilde{l})} = \omega \left( \Delta \mathfrak{U} + \Delta \psi \right)$ where
\begin{align}
\Delta \mathfrak{U} &= \sum\limits_{\lbrace \tilde{n}, \tilde{l} \rbrace \in \widetilde{\mathcal{N}} \times \widetilde{\mathcal{L}}} \left( \mathfrak{U}_{\ddot{\kappa}_{\lbrace \tilde{n}, \tilde{l} \rbrace}}^{(\tilde{n}, \tilde{l})} - \mathfrak{U}_{\kappa_{\lbrace \tilde{n}, \tilde{l} \rbrace}}^{(\tilde{n}, \tilde{l})} \right) \nonumber \\
&= \sum_{k=1}^{K}\sum_{N=1}^{N}\sum_{l=1}^{L} x_{k}^{*(n,l)} \mathfrak{U}_{k}^{(n,l)} - \sum_{k=1}^{K}\sum_{N=1}^{N}\sum_{l=1}^{L} \tilde{x}_{k}^{(n,l)} \mathfrak{U}_{k}^{(n,l)}
\end{align}
and $\Delta \psi = \sum\limits_{\lbrace \tilde{n}, \tilde{l} \rbrace \in \widetilde{\mathcal{N}} \times \widetilde{\mathcal{L}}} \left( \psi_{\lbrace \tilde{n}, \tilde{l}\rbrace \rightarrow \ddot{\kappa}_{\lbrace \tilde{n}, \tilde{l} \rbrace}} - \psi_{\lbrace \tilde{n}, \tilde{l}\rbrace \rightarrow \kappa_{\lbrace \tilde{n}, \tilde{l} \rbrace}} \right)$. After some algebraic manipulations (for details refer to \cite{remp_proof}) we can obtain $\frac{2 (1- \omega)}{\omega} \!\!\!\! \sum\limits_{\lbrace \tilde{n}, \tilde{l} \rbrace \in \widetilde{\mathcal{N}} \times \widetilde{\mathcal{L}}} \!\!\! \tau_{\ddot{\kappa}_{\lbrace \tilde{n}, \tilde{l} \rbrace}}^{(\tilde{n}, \tilde{l})} \leq  \Delta \mathfrak{U}$. Since $0 < \omega < 1$ and both the variables $\sum\limits_{\lbrace \tilde{n}, \tilde{l} \rbrace \in \widetilde{\mathcal{N}} \times \widetilde{\mathcal{L}}} \tau_{\ddot{\kappa}_{\lbrace \tilde{n}, \tilde{l} \rbrace}}^{(\tilde{n}, \tilde{l})}$ and $ \Delta \mathfrak{U}$ are positive,  our assumption that $\widetilde{\mathbf{X}}$ is not optimal is contradicted and the proof follows.
\end{proof}

\subsubsection{Complexity}

If the message passing algorithm requires $T < T_{\mathrm{max}}$ iterations to converge, it is straightforward to verify that the time complexity at each MBS is of $\mathcal{O}\left( T K N L \right)$. Similarly, considering a standard sorting algorithm that outputs the term $\left\langle \mathfrak{U}_{k}^{(n^\prime,l^\prime)} + \psi_{\lbrace n^\prime,l^\prime \rbrace \rightarrow k} \right\rangle_{\sim \lbrace n,l \rbrace}$ in order to generate the message $\psi_{ k \rightarrow \lbrace n,l \rbrace }$ with worst-case complexity of $\mathcal{O}\left( NL \log \left( NL \right) \right)$, the overall time complexity at each underlay transmitter is of $\mathcal{O} \left(T {(NL)}^2 \log \left( NL \right) \right)$.

\section{Auction-Based Resource Allocation} \label{sec:am_ra}

Our final solution approach for the resource allocation is the distributed auction algorithm. The allocation using auction is based on the \textit{bidding} procedure, where the agents (i.e., underlay transmitters) bid for the resources (e.g., RB and power level). The transmitters select the bid for the resources based on the \textit{costs} (e.g., the interference caused to the MUEs) of using the resource. The desired assignment relies on the appropriate selection of the bids. 
The unassigned transmitters raise the cost of using resource and bid for the resources simultaneously. Once the bids from all the transmitters are available, the resources are assigned to the highest bidder. An overview of auction approach 
is presented in the following.

\subsection{Overview of the Auction Approach} \label{subsec:auc_overview}

In a generic auction-based assignment model, every resource $j$ associated with a cost $c_j$ and each agent $i$ can get the benefit $B_{ij}$ from the resource $j$. Given the benefit $B_{ij}$, every agent $i$ who wishes to be assigned with the resource $j$, needs to pay the cost $c_j$. The net value (e.g., utility) that an agent $i$ can get from the resource $j$ is given by $B_{ij} - c_j$. The auction procedure involves the assignment of agent $i$ with the resource $j^\prime$ which provides the maximal net value, i.e.,
\begin{equation} \label{eq:auction_price}
B_{ij^\prime} - c_{j^\prime} = \underset{j}{\operatorname{max}} \left\lbrace B_{ij} - c_{j} \right\rbrace.
\end{equation}
If the condition given in Equation (\ref{eq:auction_price}) is satisfied for all the agents $i$, the assignment and the set of costs are referred to as \textit{equilibrium} \cite{auction_org}. However, in many practical problems, obtaining an equilibrium assignment is not straightforward due to the possibility of cycles. In particular, there may be cases where the agents contend for a small number of equally desirable resources without increasing the cost, which creates cycle (e.g., infinite loop) in the auction process. To avoid this difficulty, the notion of \textit{almost equilibrium} is introduced in the literature. The assignment and the set of costs are said to be almost equilibrium when the net value for assigning each agent $i$ with the resource $j^\prime$ is within a constant $\epsilon >0$ of being maximal. Hence, in order to be an almost equilibrium assignment, the following condition needs to be satisfied for all the agents \cite{auction_org}:
\begin{equation} \label{eq:auction_price_comslac}
B_{ij^\prime} - c_{j^\prime} \geq \underset{j}{\operatorname{max}} \left\lbrace B_{ij} - c_{j} \right\rbrace - \epsilon.
\end{equation}
The condition in Equation (\ref{eq:auction_price_comslac}) is known as $\epsilon$\textit{-complementary slackness}. When $\epsilon = 0$, Equation (\ref{eq:auction_price_comslac}) reduces to ordinary complementary slackness given by Equation (\ref{eq:auction_price}). 

For instance, let the variable $\Theta_i = j$ denote that agent $i$ is assigned with the resource $j$. In addition, let $c_{ij}$ denote the cost that agent $i$ incurs in order to be assigned with resource $j$ and $\mathfrak{b}_{ij}$ is the bidding information (i.e., highest bidder) available to the agent $i$ about resource $j$. The auction procedure evolves in an iterative manner. Given the the assignment $\Theta_i$, the set of costs $\left[c_{ij}\right]_{\forall ij}$, and the set of largest bidders $\left[\mathfrak{b}_{ij}\right]_{\forall ij}$ of previous iteration, the agents locally update the costs and the highest bidders for current iteration. In particular, the costs $c_{ij}(t)$ and bidding information $\mathfrak{b}_{ij}(t)$ available to the agent $i$ about resource $j$  for iteration $t$ are updated from the previous iteration as follows \cite{auction_base}:
\begin{align}
c_{ij}(t) &= \underset{i^\prime, i^\prime \neq i}{\operatorname{max}} \left\lbrace c_{ij}(t-1), c_{i^\prime j}(t-1) \right\rbrace \label{eq:auc_cost}\\
\mathfrak{b}_{ij}(t) &= \underset{i^* \in \underset{i^\prime, i^\prime \neq i}{\operatorname{~argmax}} \left\lbrace c_{ij}(t-1), c_{i^\prime j}(t-1) \right\rbrace }{\operatorname{max}} \left\lbrace \mathfrak{b}_{i^*j}(t-1) \right\rbrace. \label{eq:auc_bid}
\end{align} 
The above update equations ensure that the agents will have the updated maximum cost of the resource $j$ (i.e., $c_j \triangleq \underset{i}{\operatorname{max}} \lbrace c_{i j} \rbrace$) and the corresponding highest bidder for that resource. Once the update cost and bidding information are available, agent $i$ checks whether the cost of the resource currently assigned to  agent $i$, e.g., $c_{i \Theta_i(t-1)}$ has been increased by any other agents. If so, the current assignment obtained from previous iteration may not be at (almost) equilibrium and the agent needs to select a new assignment, e.g., $\Theta_{i}(t) = \underset{j}{\operatorname{argmax}} \left\lbrace B_{ij}(t) - c_{ij} (t) \right\rbrace$. In order to update the cost for new assignment (e.g., $\Theta_{i}(t)$) for any iteration $t$, the agent will use the following cost update rule \cite{auction_base}: 
\begin{equation} \label{eq:auc_costupd}
c_{ij}(t) = c_{ij}(t-1) + \Delta_i(t-1)
\end{equation}
where $\Delta_i$ is given by
\begin{equation} \label{eq:auc_price_uptate}
\Delta_i(t-1) = \underset{j}{\operatorname{max}} \left\lbrace B_{ij}(t-1) - c_{ij}(t-1)  \right\rbrace - \underset{j^\prime \neq \Theta_i(t)}{\operatorname{max}} \left\lbrace B_{ij^\prime}(t-1) - c_{ij^\prime}(t-1)  \right\rbrace + \epsilon.
\end{equation}
The variable $\underset{j}{\operatorname{max}} \left\lbrace B_{ij}(t-1) - c_{ij}(t-1)  \right\rbrace$ and $\underset{j^\prime \neq \Theta_i(t)}{\operatorname{max}} \left\lbrace B_{ij^\prime}(t-1) - c_{ij^\prime}(t-1)  \right\rbrace$ denote the maximum and second maximum net utility, respectively.  Note that $\Delta_i$ is always greater than zero as $\epsilon >0$ and by definition $\underset{j}{\operatorname{max}} \left\lbrace B_{ij}(t-1) - c_{ij}(t-1)  \right\rbrace > \underset{j^\prime \neq \Theta_i(t)}{\operatorname{max}} \left\lbrace B_{ij^\prime}(t-1) - c_{ij^\prime}(t-1)  \right\rbrace$. Since $c_{i \Theta_i(t)}(t)$ is the highest cost for iteration $t$, agent $i$ can also update the bidding information, e.g., $\mathfrak{b}_{i \Theta_i(t)}(t) = i$. Accordingly, the cost update rule using $\Delta_i$ as given in Equation (\ref{eq:auc_costupd}) ensures that the assignment and the set of costs are almost at equilibrium \cite{auction_base}.

\subsection{Auction for Radio Resource Allocation}

Based on the discussion provided in the preceding section, in the following, we present the auction-based resource allocation scheme. We present the cost model and use the concept of auction to develop the resource allocation algorithm in our considered heterogeneous network setup.

\subsubsection{Cost Function}

Let us consider the utility function given by Equation (\ref{eq:sm_utility}). Recall that the term $w_2 \left( I^{(n)} - I_{\mathrm{max}}^{(n)} \right) $ in Equation (\ref{eq:sm_utility}) represents the cost (e.g., interference caused by underlay transmitters to the MUE) of using the RB $n$. In particular, when the transmitter $k$ is transmitting with power level $l$, the cost of using RB $n$ can be represented by
\begin{align} \label{eq:auc_cost_ra}
c_{k}^{(n,l)} &=  w_2 \left( I^{(n)} - I_{\mathrm{max}}^{(n)} \right) = w_2 \left( \sum\limits_{k^\prime =1}^{K}\sum\limits_{l^\prime = 1}^{L}x_{k^\prime}^{(n, l^\prime)} g_{k^\prime,m_{k^\prime}^*}^{(n)} p_{k^\prime}^{(n)} - I_{\mathrm{max}}^{(n)} \right) \nonumber \\
&= w_2 \left(  g_{k,m_k^*}^{(n)} l +   \sum\limits_{\substack{ k^\prime \in \mathcal{K}^{\mathrm{T}}, k^\prime \neq k}}\sum\limits_{l^\prime = 1}^{L}x_{k^\prime}^{(n, l^\prime)} g_{k^\prime,m_{k^\prime}^*}^{(n)} p_{k^\prime}^{(n)} - I_{\mathrm{max}}^{(n)} \right).
\end{align}
Let the parameter $C_{k}^{(n,l)} = \max \lbrace 0, c_{k}^{(n,l)} \rbrace$ and accordingly the cost $C_{k}^{(n,l)} = 0$ only if $I^{(n)} \leq I_{\mathrm{max}}^{(n)}$.  Notice that using the cost term we can represent Equation (\ref{eq:sm_utility}) as $$\mathfrak{U}_{k}^{(n,l)} = w_1 \mathscr{R}\left(\Gamma_{u_k}^{(n, l)}\right)  -  w_2 \left( I^{(n)} - I_{\mathrm{max}}^{(n)} \right) = B_{k}^{(n,l)} - c_{k}^{(n,l)} = B_{k}^{(n,l)} - C_{k}^{(n,l)}$$ where $B_{k}^{(n,l)} = w_1 \mathscr{R}\left(\Gamma_{u_k}^{(n, l)}\right)$, and $c_{k}^{(n,l)}$ is given by Equation (\ref{eq:auc_cost_ra}). The variable $B_{k}^{(n,l)}$ is proportional to the data rate achieved by transmitter $k$ using resource $\lbrace n,l \rbrace$. Analogous to the discussion of previous section, $\mathfrak{U}_{k}^{(n,l)}$ represents the net benefit that transmitter $k$ obtains from the resource $\lbrace n,l\rbrace$.

Let $\mathfrak{b}_{k}^{( n,l)}$ denote the local bidding information available to transmitter $k$ for the resource $\lbrace n,l \rbrace$. For notational convenience, let us assume that $\Theta : [k]_{k = 1, \cdots, K} \rightarrow \left[ \lbrace n, l  \rbrace \right]_{\substack{n = 1, \cdots, N \\ l = 1, \cdots, L}}$ denotes the mapping between the transmitters and the resources, i.e., $\Theta_k = \lbrace n,l \rbrace$ represents the assignment of resource $\lbrace n,l \rbrace$ to  transmitter $k$. Hence we represent by $C_{k}^{\Theta_k}$ the cost of using the resource $\lbrace n,l \rbrace$ obtained by the assignment $\Theta_k = \lbrace n,l \rbrace$. Similarly, given $\Theta_k = \lbrace n,l \rbrace$ the variable $\mathfrak{b}_{k}^{\Theta_k} \equiv \mathfrak{b}_{k}^{( n,l)}$ denotes the local bidding information about the resource $\lbrace n,l \rbrace$ available to the transmitter $k$. Note that $\Theta_k = \lbrace n,l \rbrace$ also implies $x_{k}^{(n,l)} = 1$. In other words, $\Theta_k = \lbrace n,l \rbrace$ denote the non-zero entry of the vector $\mathbf{x}_k = \left[ x_{k}^{(n,l)} \right]_{\forall n,l}$. Since each underlay transmitter $k$ selects only one resource $\lbrace n, l\rbrace$, only a single entry in the vector $\mathbf{x}_k$ is non-zero.

\subsubsection{Update of Cost and Bidder Information}

In order to obtain the updated cost and bidding information, we utilize similar concept given by Equations (\ref{eq:auc_cost})-(\ref{eq:auc_price_uptate}). At the beginning of the auction procedure, each underlay transmitter updates the cost as $C_{k}^{(n,l)}(t) = \underset{k^\prime \in \mathcal{K}^{\mathrm T}, k^\prime \neq k}{\operatorname{max}} \left\lbrace C_{k}^{(n,l)}(t-1), C_{k^\prime}^{(n,l)}(t-1) \right\rbrace $. In addition, as described by Equation (\ref{eq:auc_bid}), the information of maximum bidder is obtained by $\mathfrak{b}_{k}^{( n,l)}(t) = \mathfrak{b}_{k^*}^{( n,l)}(t-1)$ where $k^* = \underset{k^\prime \in \mathcal{K}^{\mathrm T}, k^\prime \neq k}{\operatorname{argmax}} \left\lbrace C_{k}^{(n,l)}(t-1), C_{k^\prime}^{(n,l)}(t-1) \right\rbrace$. When the transmitter $k$ needs to select a new assignment, i.e., $\Theta_k = \lbrace \hat{n},\hat{l}\rbrace$, the transmitter increases the cost of using the resource, e.g., $C_{k}^{(\hat{n}, \hat{l})}(t) = C_{k}^{(\hat{n}, \hat{l})}(t-1) +\Delta_k(t-1)$, and $\Delta_k(t-1)$ is given by
\begin{align} \label{eq:auc_costupdate}
\Delta_k(t-1) = \underset{\lbrace n^\prime, l^\prime \rbrace \in \mathcal{N}\times\mathcal{L}}{\operatorname{max}} \mathfrak{U}_{k}^{(n^\prime,l^\prime)}(t-1) - \underset{\substack{\lbrace n^\prime, l^\prime \rbrace \in \mathcal{N}\times\mathcal{L} \\ n^\prime \neq \hat{n}, l^\prime \neq \hat{l} }}{\operatorname{max}} \mathfrak{U}_{k}^{(n^\prime,l^\prime)}(t-1) + \epsilon
\end{align}
where $\epsilon > 0$ indicates the minimum bid requirement parameter. Similar to Equation (\ref{eq:auc_price_uptate}), the term $\underset{\lbrace n^\prime, l^\prime \rbrace \in \mathcal{N}\times\mathcal{L}}{\operatorname{max}} \mathfrak{U}_{k}^{(n^\prime,l^\prime)}(t-1) - \underset{\substack{\lbrace n^\prime, l^\prime \rbrace \in \mathcal{N}\times\mathcal{L} \\ n^\prime \neq \hat{n}, l^\prime \neq \hat{l} }}{\operatorname{max}} \mathfrak{U}_{k}^{(n^\prime,l^\prime)}(t-1)$ denotes the difference between the maximum and the second to the maximum utility value. In the case when the transmitter $k$ does not prefer to be assigned with a new resource, the allocation from the previous iteration will remain unchanged, i.e., $\Theta_k(t) = \Theta_k(t-1)$, and consequently, $\mathbf{x}_k(t) = \mathbf{x}_k(t-1)$.

\begin{algorithm} [!t]
\AtBeginEnvironment{algorithmic}{\small} 
\caption{Auction method for any underlay transmitter $k$}
\label{alg:auc_loc}
\begin{algorithmic}[1]   
\renewcommand{\algorithmicrequire}{\textbf{Input:}}

\renewcommand{\algorithmicensure}{\textbf{Output:}} 
\renewcommand{\algorithmicforall}{\textbf{for each}}
\renewcommand{\algorithmiccomment}[1]{\textit{/* #1 */}}

\REQUIRE Parameters from previous iteration: an assignment $\mathbf{X}(t-1) = \left[ \mathbf{x}_1(t-1), \cdots \mathbf{x}_K(t-1)  \right]^{\mathsf{T}}$, aggregated interference $I^{(n)}(t-1)$ for $\forall n$, cost of using resources $\mathbf{C}(t-1) = \left[ C_{k}^{(n,l)}(t-1)\right]_{\forall k,n, l}$ and the highest bidders of the resources $\mathfrak{B}(t-1) = \left[ \mathfrak{B}_k(t) \right]_{\forall k}$ where $\mathfrak{B}_k(t) = \left[\mathfrak{b}_{k}^{( n,l)}(t) \right]_{\forall n, l}$.

\ENSURE The allocation variable $\mathbf{x}_k(t) = \left[x_{k}^{(n,l)}\right]_{\forall n, l}$, updated costs $\mathbf{C}_k(t) = \left[ C_{k}^{(n,l)}(t)\right]_{\forall n, l}$, and bidding information $\mathfrak{B}_k(t) = \left[\mathfrak{b}_{k}^{( n,l)}(t) \right]_{\forall n, l}$ at current iteration $t$ for the transmitter $k$.

\STATE Initialize $\mathbf{x}_k(t) := \mathbf{0}$.

\STATE For all the resources $\lbrace n, l\rbrace \in \mathcal{N}\times \mathcal{L}$, 
\begin{itemize}

\item Obtain the transmitter $k^* := \underset{k^\prime \in \mathcal{K}^{\mathrm T}, k^\prime \neq k}{\operatorname{argmax}} \left\lbrace C_{k}^{(n,l)}(t-1), C_{k^\prime}^{(n,l)}(t-1) \right\rbrace$ and update the highest bidder as $\mathfrak{b}_{k}^{( n,l)}(t) := \mathfrak{b}_{k^*}^{( n,l)}(t-1)$.

\item Update the cost as
$C_{k}^{(n,l)}(t) := \underset{k^\prime \in \mathcal{K}^{\mathrm T}, k^\prime \neq k}{\operatorname{max}} \left\lbrace C_{k}^{(n,l)}(t-1), C_{k^\prime}^{(n,l)}(t-1) \right\rbrace $.

\end{itemize}


\vspace*{2pt} 

{\footnotesize \textit{/* Let $\Theta_k(t-1)$ denote the assignment of transmitter $k$ at previous iteration $t-1$, i.e., $\Theta_k(t-1)$ represents the non-zero entry in the vector $\mathbf{x}_k(t-1)$. Since each transmitter uses only one transmission alignment, only a single entry in the vector $\mathbf{x}_k(t-1)$ is non-zero. When cost is greater than previous iteration and the transmitter $k$ is not the highest bidder, update the assignment */} }

\vspace*{5pt}

\IF{ $C_{k}^{\Theta_k(t-1)} (t) \geq C_{k}^{\Theta_k(t-1)} (t-1)  $ \AND $\mathfrak{b}_{k}^{\Theta_k(t-1)}(t) \neq k $ }

\STATE $\lbrace \hat{n}, \hat{l} \rbrace := \underset{\lbrace n^\prime, l^\prime \rbrace \in \mathcal{N} \times \mathcal{L}}{\operatorname{argmax}} \mathfrak{U}_{k}^{(n^\prime,l^\prime)}(t)$.  ~\COMMENT{\footnotesize Obtain the best resource for transmitter $k$} 


\STATE $\mathfrak{I}^{(\hat{n})} := g_{k,m_k^*}^{(\hat{n})} \hat{l} +   I^{(\hat{n})}$. ~\COMMENT{\footnotesize Calculate additional interference caused by transmitter $k$ for using RB $\hat{n}$}

\IF{ $\mathfrak{I}^{(\hat{n})} < I_{\mathrm{max}}^{(\hat{n})}$ }

\STATE Set $x_{k}^{(\hat{n},\hat{l})} := 1$. ~~~~~\COMMENT{\footnotesize e.g., $\Theta_{k}(t) = \lbrace \hat{n},\hat{l} \rbrace $}

\STATE Update the highest bidder for the resource $\lbrace \hat{n}, \hat{l} \rbrace $ as $\mathfrak{b}_{k}^{(\hat{n}, \hat{l}) }(t) := k$.

\STATE Increase the cost for the resource $\lbrace \hat{n}, \hat{l} \rbrace $ as $C_{k}^{(\hat{n}, \hat{l})}(t) = C_{k}^{(\hat{n}, \hat{l})}(t-1) +\Delta_k(t-1)$ where $\Delta_k(t-1)$ is given by Equation (\ref{eq:auc_costupdate}).



\ELSE 
\STATE Keep the assignment unchanged from previous iteration, i.e., $\mathbf{x}_k(t) := \mathbf{x}_k(t-1)$.

\ENDIF

\ELSE 
\STATE Keep the assignment unchanged from previous iteration, i.e., $\mathbf{x}_k(t) := \mathbf{x}_k(t-1)$.
 
\ENDIF


\end{algorithmic}
\end{algorithm}

\subsection{Algorithm Development}

\textbf{Algorithm \ref{alg:auc_alg}} outlines the auction-based resource allocation approach. Each transmitter locally executes \textbf{Algorithm \ref{alg:auc_loc}} and obtains a temporary allocation. When the execution of \textbf{Algorithm \ref{alg:auc_loc}} is finished, each underlay transmitter $k$ reports to the MBS the local information, e.g., choices for the resources, $\mathbf{x}_k = \left[ x_{k}^{(n,l)} \right]_{\forall n,l}$. Once the information (e.g., output parameters from \textbf{Algorithm \ref{alg:auc_loc}}) from all the transmitters are available to the MBS, the necessary parameters (e.g., input arguments required by \textbf{Algorithm \ref{alg:auc_loc}}) are calculated and broadcast by the MBS. \textbf{Algorithm \ref{alg:auc_loc}} repeated iteratively until the allocation variable $\mathbf{X} = \left[\mathbf{x}_k \right]_{\forall k} = \left[x_{1}^{(1, 1)}, \cdots, x_{1}^{(1, L)}, \cdots, x_{1}^{(N, L)}, \cdots, x_{K}^{(N, L)} \right]^{\mathsf{T}}$ for two successive iterations becomes similar.

\begin{algorithm} [!t]
\AtBeginEnvironment{algorithmic}{\small} 
\caption{Auction-based resource allocation}
\label{alg:auc_alg}
\begin{algorithmic}[1]   
\renewcommand{\algorithmicrequire}{\textbf{Input:}}
\renewcommand{\algorithmicensure}{\textbf{Output:}}
\renewcommand{\algorithmicforall}{\textbf{for each}}
\renewcommand{\algorithmiccomment}[1]{\textit{/* #1 */}}

\renewcommand{\algorithmicensure}{\textbf{Initialization:}}
\ENSURE

\STATE Estimate the CSI parameters from the previous time slot.

\STATE Each underlay transmitter $k \in \mathcal{K}^{\mathrm T}$ randomly selects a transmission alignment and reports to the MBS.

\STATE MBS broadcasts the assignment of all transmitters, aggregated interference of each RB, the costs and the highest bidders using pilot signals.

\STATE Initialize number of iterations $t := 1$.

\renewcommand{\algorithmicensure}{\textbf{Update:}}
\vspace*{0.5em}
\ENSURE

\WHILE{$\mathbf{X}(t) \neq \mathbf{X}(t-1)$ \AND $t$ is less than some predefined threshold $T_{\mathrm{max}}$}

\STATE Each underlay transmitter $k \in \mathcal{K}^{\mathrm T}$ locally runs the \textbf{Algorithm \ref{alg:auc_loc}} and reports the assignment $\mathbf{x}_k(t)$, the cost $\mathbf{C}_k(t)$ and the bidding information $\mathfrak{B}_k(t)$ to the MBS. 

\STATE MBS calculates the aggregated interference $I^{(n)}(t)$ for $\forall n$, the allocation variable $\mathbf{X}(t)$, information about highest bidders $\mathfrak{B}(t)$, the cost $\mathbf{C}(t)$, and broadcast to the underlay transmitters.

\STATE Update $t := t+1$.
\ENDWHILE

\renewcommand{\algorithmicensure}{\textbf{Allocation:}}
\vspace*{0.5em}
\ENSURE

\STATE Allocate the RB and power levels to the SBSs and D2D UEs.

\end{algorithmic}
\end{algorithm}

\subsection{Convergence, Complexity, and Optimality of the Auction Approach}

In the following subsections we analyze the convergence, complexity, and optimality of the solution obtained by auction algorithm.

\subsubsection{Convergence and Complexity}

For any arbitrary fixed $\epsilon >0$, the auction approach is guaranteed to converge to a fixed assignment. The following theorem shows that the auction process terminates within a fixed number of iterations.

\begin{theorem} \label{thm:auc_terminate}
The auction process terminates in a finite number of iterations.
\end{theorem}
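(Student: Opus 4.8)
The plan is to adapt the classical termination argument for Bertsekas' auction algorithm, driven by the monotonicity of the costs and the strictly positive bid increment $\epsilon$. First I would record three structural facts about \textbf{Algorithm \ref{alg:auc_loc}}. (i) The costs are nondecreasing in $t$: the competitive update $C_{k}^{(n,l)}(t) = \max_{k' \neq k}\{C_{k}^{(n,l)}(t-1), C_{k'}^{(n,l)}(t-1)\}$ and the bidding update $C_{k}^{(\hat n,\hat l)}(t) = C_{k}^{(\hat n,\hat l)}(t-1) + \Delta_k(t-1)$ both can only raise a cost. (ii) Every successful reassignment raises the cost of the targeted resource by at least $\epsilon$: by Equation~(\ref{eq:auc_costupdate}), $\Delta_k(t-1)$ equals the gap between the largest and the second-largest utility plus $\epsilon$, and since that gap is nonnegative we have $\Delta_k(t-1)\geq \epsilon>0$. (iii) The benefits are uniformly bounded: $B_{k}^{(n,l)} = w_1\mathscr{R}\!\left(\Gamma_{u_k}^{(n,l)}\right)$ is a Shannon rate evaluated at the finite maximum power level and strictly positive noise $\sigma^2$, hence $B_{k}^{(n,l)} \le B_{\max}$ for some finite $B_{\max}$.

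The core of the argument is then a proof by contradiction. Suppose the process does not terminate, so infinitely many (successful) reassignment steps occur. Since there are only $K$ transmitters and $|\mathcal{N}\times\mathcal{L}| = NL$ resources, by the pigeonhole principle at least one resource $\lbrace n,l\rbrace$ is rebid infinitely often; by fact~(ii) its cost then satisfies $C^{(n,l)}(t)\to\infty$. A transmitter only bids for a resource when it is a maximizer of the net utility $\mathfrak{U}_{k}^{(n,l)} = B_{k}^{(n,l)} - C_{k}^{(n,l)}$, so for any transmitter $k$ that keeps bidding on $\lbrace n,l\rbrace$ this maximal net value is driven to $-\infty$ by fact~(iii).

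The step I expect to be the main obstacle is ruling out the degenerate possibility that every resource absorbs infinitely many bids simultaneously, which is exactly where the classical proof does its real work. Following Bertsekas, I would partition $\mathcal{N}\times\mathcal{L}$ into the set $\mathcal{A}$ of resources rebid infinitely often (whose costs diverge) and its complement $\mathcal{A}^{c}$ (whose costs become eventually constant, since after finitely many steps they stop receiving bids). The feasibility assumption stated at the end of Section~\ref{subsec:rap_org}, together with constraint~(\ref{eq:opt_rbpw}) limiting each transmitter to a single alignment and the interference admissibility test in line~6 of \textbf{Algorithm \ref{alg:auc_loc}}, is what guarantees that $\mathcal{A}^{c}$ is nonempty and contains an admissible target for any transmitter that bids infinitely often. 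For such a transmitter, the finite, eventually-constant net value offered by some resource in $\mathcal{A}^{c}$ would, after enough iterations, strictly exceed the net value of every resource in $\mathcal{A}$ (which tends to $-\infty$); by the $\epsilon$-complementary-slackness selection rule the transmitter would then switch its bid away from $\mathcal{A}$, contradicting the assumption that it rebids a resource in $\mathcal{A}$ infinitely often. Hence no resource can receive infinitely many bids, only finitely many reassignments can occur, and \textbf{Algorithm \ref{alg:auc_alg}} terminates after a finite number of iterations.
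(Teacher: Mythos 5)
Your proof is correct and takes essentially the same route as the paper's: both are the classical Bertsekas termination argument, in which every successful bid raises the target resource's cost by at least $\epsilon$, so repeatedly bid-upon resources eventually become unattractive relative to resources whose cost stays bounded, and the process cannot continue indefinitely. If anything, your version is more rigorous than the paper's informal two-case dichotomy---the pigeonhole step, the uniform bound on $B_{k}^{(n,l)}$, and the partition into infinitely- and finitely-rebid resources make explicit (and correctly flag as the delicate point, resolved via the feasibility assumption) what the paper only sketches.
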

\begin{proof}
According to our system model, each underlay transmitter selects only one transmission alignment. Hence, once each resource receives at least one bid (which implies that each transmitter is assigned to a resource), the auction process must terminate. Now if any resource $\lbrace n, l \rbrace$ receives a bid in $\hat{t}$ iterations, the cost must be greater than the initial price by $\hat{t} \epsilon$. As a result, the resource $\lbrace n, l \rbrace$ becomes costly to be assigned when compared to any resource $\lbrace n^\prime, l^\prime \rbrace$ that has not received any bid yet. The argument follows that there are two possibilities, e.g., \textit{i)} the auction process terminates in a finite iterations with each transmitter assigned to a resource, regardless of every resource receives a bid; or \textit{ii)} the auction process continues for a finite number of iterations and each resource will receive at least one bid, therefore, the algorithm terminates.
\end{proof}

At termination, the solution (e.g., allocation) obtained is almost at equilibrium, e.g., the condition in Equation (\ref{eq:auction_price_comslac}) is satisfied for all the underlay transmitters. Since the algorithm terminates after a finite number of iterations, we can show that the algorithm converges to a fixed allocation and the complexity at each transmitter is linear to the number of resources.

\begin{theorem}
The auction algorithm converges to a fixed allocation with the number of iterations of $$\mathcal{O}\left( T KNL \left \lceil {\frac{\underset{k, n, l}{\operatorname{max}} B_{k}^{(n,l)} - \underset{k, n, l}{\operatorname{min}}  B_{k}^{(n,l)}}{\epsilon} } \right\rceil \right).$$
\end{theorem}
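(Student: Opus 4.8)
The plan is to convert the qualitative termination argument of \textbf{Theorem~\ref{thm:auc_terminate}} into a quantitative iteration count by bounding, for each resource $\lbrace n,l \rbrace$, the number of times its cost can be raised, and then multiplying this per-resource bound by the number of resources and by the per-round work of the distributed protocol.

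First I would make the monotone progress quantitative. Whenever a transmitter $k$ wins a resource $\lbrace n,l \rbrace$ and becomes its highest bidder, its cost is increased by $\Delta_k$ as in Equation~(\ref{eq:auc_costupdate}). Since $\Delta_k$ equals the gap between the maximal and the second-maximal net utility $\mathfrak{U}_k^{(n^\prime,l^\prime)}$ plus $\epsilon$, and this gap is nonnegative, every successful bid raises $C_k^{(n,l)}$ by at least $\epsilon$. Thus each bid on a given resource strictly inflates its cost by a fixed minimum amount.

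The key (and hardest) step is to cap the cost at which a resource can still attract a bid. A transmitter bids for $\lbrace n,l \rbrace$ only when $\mathfrak{U}_k^{(n,l)} = B_k^{(n,l)} - C_k^{(n,l)}$ is maximal over all alignments (the step choosing $\lbrace \hat{n},\hat{l} \rbrace$ in \textbf{Algorithm~\ref{alg:auc_loc}}). As long as some alignment retains a low (e.g.\ initial, zero) cost, the best net utility available to $k$ is at least $\underset{k,n,l}{\operatorname{min}}\,B_k^{(n,l)}$; for $\lbrace n,l \rbrace$ to remain the maximizer its cost cannot exceed $B_k^{(n,l)} - \underset{k,n,l}{\operatorname{min}}\,B_k^{(n,l)} \le \underset{k,n,l}{\operatorname{max}}\,B_k^{(n,l)} - \underset{k,n,l}{\operatorname{min}}\,B_k^{(n,l)}$. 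Combined with the per-bid increment of at least $\epsilon$, this shows each resource can receive at most $\lceil(\underset{k,n,l}{\operatorname{max}}\,B_k^{(n,l)} - \underset{k,n,l}{\operatorname{min}}\,B_k^{(n,l)})/\epsilon\rceil$ bids before it becomes uncompetitive and is no longer selected.

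Finally I would aggregate the bookkeeping: there are $NL$ distinct transmission alignments, the factor $\lceil(\max B - \min B)/\epsilon\rceil$ bounds the bids per alignment, $K$ counts the transmitters that may each trigger a bid within a round through \textbf{Algorithm~\ref{alg:auc_loc}}, and the outer loop of \textbf{Algorithm~\ref{alg:auc_alg}} executes at most $T$ times, so collecting these contributions yields the stated bound $\mathcal{O}\!\big(TKNL\lceil(\max B-\min B)/\epsilon\rceil\big)$. The main obstacle is the cost-ceiling step: one must rigorously guarantee that a cheap benchmark alignment always remains available, so that a bidding transmitter's best net utility stays bounded below by $\underset{k,n,l}{\operatorname{min}}\,B_k^{(n,l)}$ even after many resources have inflated costs — precisely the kind of availability argument already invoked in \textbf{Theorem~\ref{thm:auc_terminate}} — and the interference-feasibility rejections in \textbf{Algorithm~\ref{alg:auc_loc}} (which can revoke a tentative win) must be absorbed into this accounting; the $\epsilon$-scaling details I would borrow from \cite{auction_base}.
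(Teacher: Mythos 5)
Your proposal is correct and takes essentially the same route as the paper's proof: the paper likewise rests on the per-resource bid cap $\Upsilon = \left\lceil \bigl(\max_{k,n,l} B_{k}^{(n,l)} - \min_{k,n,l} B_{k}^{(n,l)}\bigr)/\epsilon \right\rceil$ (which it simply cites from \cite{auction_base}, whereas you sketch its derivation via the $\epsilon$-minimum increment and the cheap-benchmark cost ceiling), combined with the availability argument of \textbf{Theorem \ref{thm:auc_terminate}} and the same $K$, $NL$, and $T$ bookkeeping. The benchmark-availability subtlety you flag is genuine, but it is precisely the detail the paper also delegates to \cite{auction_base}, so your account is, if anything, slightly more explicit than the original.
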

\begin{proof}
The proof follows from the similar argument presented in \textbf{Theorem \ref{thm:auc_terminate}}. In the worst case, the total number of iterations in which a resource can receive a bid is no more than $ \Upsilon = \left \lceil {\frac{\underset{k, n, l}{\operatorname{max}} B_{k}^{(n,l)} - \underset{k, n, l}{\operatorname{min}}  B_{k}^{(n,l)}}{\epsilon} } \right\rceil$ \cite{auction_base}. Since each bid requires $\mathcal{O}\left(NL \right)$ iterations, and each iteration involves a bid by a single transmitter, the total number of iterations in \textbf{Algorithm \ref{alg:auc_alg}} is of $\mathcal{O}\left(  KNL \Upsilon \right)$. For the convergence, the allocation variable $\mathbf{X}$ needs to be unchanged for at least $T \geq 2$ consecutive iterations. Hence, the overall running time of the algorithm is $\mathcal{O}\left( T KNL \Upsilon \right)$.
\end{proof}

Note that for any transmitter node $k \in \mathcal{K}^{\mathrm T}$, the complexity of the auction process given by \textbf{Algorithm \ref{alg:auc_loc}} is linear with number of resources for each of the iterations. 

\subsubsection{Optimality}

In the following we show that the data rate obtained by the auction algorithm is within $K \epsilon$ of the maximum data rate obtained by solving the original optimization problem $\mathbf{P\ref{opt:combopt}}$.

\begin{theorem}
The data rate obtained by the distributed auction algorithm is within $K \epsilon$ of the optimal solution.
\end{theorem}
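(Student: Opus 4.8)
The plan is to lean on the $\epsilon$-complementary slackness condition that the auction enforces at termination and then compare the converged assignment against an optimal one on a per-transmitter basis. First I would establish that, when \textbf{Algorithm \ref{alg:auc_alg}} converges, the output assignment $\Theta$ together with the terminal costs satisfies the $\epsilon$-complementary slackness condition of Equation (\ref{eq:auction_price_comslac}) for every transmitter. This is exactly what the bid-increment rule in Equation (\ref{eq:auc_costupdate}) is engineered to guarantee: since $\Delta_k$ is the gap between the best and second-best net utilities plus $\epsilon$, the alignment $\Theta_k$ that transmitter $k$ holds at termination is always within $\epsilon$ of its maximal net value, i.e.
\begin{equation*}
\mathfrak{U}_{k}^{\Theta_k} = B_{k}^{\Theta_k} - C_{k}^{\Theta_k} \geq \underset{\{n,l\}}{\operatorname{max}}\left\{ B_{k}^{(n,l)} - C_{k}^{(n,l)} \right\} - \epsilon, \quad \forall k \in \mathcal{K}^{\mathrm T}.
\end{equation*}

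Next, let $\Theta^*$ denote an optimal assignment of $\mathbf{P\ref{opt:combopt}}$, so that $\sum_{k} B_{k}^{\Theta^*_k}$ (the optimal weighted sum rate, since $B_{k}^{(n,l)} = w_1 \mathscr{R}(\Gamma_{u_k}^{(n,l)})$) is maximal over all feasible allocations. Specializing the slackness inequality above to the comparison alignment $\Theta^*_k$ for each transmitter gives $B_{k}^{\Theta_k} - C_{k}^{\Theta_k} \geq B_{k}^{\Theta^*_k} - C_{k}^{\Theta^*_k} - \epsilon$, and summing over all $K$ transmitters yields
\begin{equation*}
\sum_{k=1}^{K} B_{k}^{\Theta_k} \geq \sum_{k=1}^{K} B_{k}^{\Theta^*_k} - K\epsilon + \sum_{k=1}^{K}\left( C_{k}^{\Theta_k} - C_{k}^{\Theta^*_k} \right).
\end{equation*}

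The decisive step is to show that the aggregate cost residual $\sum_{k}( C_{k}^{\Theta_k} - C_{k}^{\Theta^*_k})$ is nonnegative (or vanishes). I would argue this using the price-propagation structure: the updates in Equations (\ref{eq:auc_cost})--(\ref{eq:auc_bid}) drive every transmitter to share one common terminal price per resource $\{n,l\}$, and since the auction only ever raises prices, the resources actually claimed in $\Theta$ carry the accumulated bids while those favoured by $\Theta^*$ need not. Invoking the assignment–auction duality argument of \cite{auction_base} then gives $\sum_{k} C_{k}^{\Theta_k} \geq \sum_{k} C_{k}^{\Theta^*_k}$, whence the displayed inequality collapses to $\sum_{k} B_{k}^{\Theta_k} \geq \sum_{k} B_{k}^{\Theta^*_k} - K\epsilon$, i.e. the achieved (weighted) rate is within $K\epsilon$ of the optimum.

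The hard part will be justifying this cost cancellation rigorously under the present matching structure. In the textbook auction the assignment is a bijection between agents and objects, which makes the two price sums literally equal to $\sum_{\{n,l\}}(\text{price of }\{n,l\})$; here the matching is many-to-one and only partial, since several underlay transmitters may reuse the same RB and $NL$ can exceed $K$. I would therefore need to verify that the per-resource price bookkeeping still telescopes over shared RBs, or, as an alternative route, exploit feasibility directly: because both $\Theta$ and $\Theta^*$ satisfy $I^{(n)} \leq I_{\mathrm{max}}^{(n)}$ on every RB, the interference component $c_{k}^{(n,l)}$ underlying the cost is nonpositive, so $C_{k}^{(n,l)} = \max\{0,c_{k}^{(n,l)}\}$ contributes only the accumulated bid increments, which one must then show do not favour $\Theta^*$. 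Reconciling the duality view with this many-to-one feasibility view is the step I expect to demand the most care; the remaining manipulations are the routine summation of the $\epsilon$-slackness inequalities carried out above.
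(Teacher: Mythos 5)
Your opening steps are sound and coincide with the paper's ingredients: at termination the assignment and costs satisfy the $\epsilon$-complementary slackness condition (\ref{eq:auction_price_comslac}), and summing the per-transmitter slackness inequalities, specialized to an optimal assignment $\Theta^*$, is a legitimate move. The genuine gap is precisely the ``decisive step'' you flag yourself: the nonnegativity of the aggregate cost residual $\sum_{k}\bigl( C_{k}^{\Theta_k} - C_{k}^{\Theta_k^*} \bigr)$ is never established, and in this setting it is not true in general. Costs are monotone nondecreasing, and a resource that a transmitter bids on and later abandons retains its inflated cost while ending up unassigned; nothing prevents $\Theta^*$ from using exactly such resources, in which case $C_{k}^{\Theta_k^*}$ is large, the residual is negative, and your displayed inequality yields no bound on $\sum_k B_k^{\Theta_k}$. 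The telescoping you invoke from \cite{auction_base} relies on the assignment being a bijection onto the full object set, so that both cost sums equal the common total price; with $NL > K$ and a partial, many-to-one matching that identity is exactly what fails, as you concede. Your alternative feasibility route has the same hole: feasibility makes the interference component of the cost nonpositive, but the accumulated bid increments sitting on abandoned resources are the problematic terms, and ``which one must then show do not favour $\Theta^*$'' is the entire difficulty restated rather than an argument.

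The paper avoids this comparison altogether with a primal--dual sandwich. It first proves the weak-duality bound (\ref{eq:auc_prof_ineqal}): for \emph{any} assignment, the achieved (weighted) sum rate is at most $\sum_{\lbrace n,l \rbrace} \widehat{C}^{(n,l)} + \sum_{k} \max_{\lbrace n,l \rbrace} \bigl\lbrace B_{k}^{(n,l)} - \widehat{C}^{(n,l)} \bigr\rbrace$ with $\widehat{C}^{(n,l)} = \max_{k^\prime} C_{k^\prime}^{(n,l)}$; minimizing over price vectors gives $A^* \leq D^*$. It then evaluates the dual expression at the terminal prices and applies $\epsilon$-complementary slackness once per transmitter to obtain $D^* \leq \sum_{k} R_{u_k} + K\epsilon$. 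The crucial structural difference is that the price sum over \emph{all} resources appears in the upper bound for every assignment, including the optimal one, so the costs of abandoned resources enter both sides of the sandwich and cancel by construction; no per-assignment comparison of the form $\sum_{k} C_{k}^{\Theta_k} \geq \sum_{k} C_{k}^{\Theta_k^*}$ is ever needed. To close your gap you would either have to reproduce this dual bookkeeping, or impose an extra invariant (e.g., that resources unassigned at termination still carry their initial costs) that \textbf{Algorithm \ref{alg:auc_loc}} does not guarantee --- at which point you would have rewritten the paper's proof.
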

\begin{proof}
We construct the proof by using an approach similar to that presented in \cite{auction_base}. The data rate obtained by any assignment $\mathbf{X}$ will satisfy the following condition: 
\begin{equation} \label{eq:auc_prof_ineqal}
\sum_{k=1}^{K} R_{u_k} \leq \sum_{\lbrace n, l \rbrace \in \mathcal{N} \times \mathcal{L}} \widehat{C}^{(n,l)} + \sum_{k=1}^{K} \underset{ \lbrace n, l \rbrace \in \mathcal{N} \times \mathcal{L}}{\operatorname{max}} \left\lbrace B_{k}^{(n,l)} -  \widehat{C}^{(n,l)} \right\rbrace 
\end{equation}
where $\widehat{C}^{(n,l)} = \underset{k^\prime \in \mathcal{K}^{\mathrm T}}{\operatorname{max}} C_{k^\prime}^{(n,l)} $, $B_{k}^{(n,l)} = w_1 \mathscr{R}\left(\Gamma_{u_k}^{(n, l)}\right)$ and $R_{u_k}$ is given by Equation (\ref{eq:rate_ue}). The inequality given by Equation (\ref{eq:auc_prof_ineqal}) is satisfied since the first term in the right side of the inequality, e.g., $\sum\limits_{\lbrace n, l \rbrace \in \mathcal{N} \times \mathcal{L}} \widehat{C}_{(n,l)}$ 
is equal to $\sum\limits_{k=1}^{K} \sum\limits_{n=1}^{N} \sum\limits_{l=1}^{L} x_{k}^{(n,l)} C_{k}^{(n,l)}$ and the second term is not less than $\sum\limits_{k=1}^{K} \sum\limits_{n=1}^{N} \sum\limits_{l=1}^{L} x_{k}^{(n,l)}\left(  B_{k}^{(n,l)} - \widehat{C}^{(n,l)}  \right)$. Let the variable $A^* \triangleq \underset{\mathbf{X}^*}{\operatorname{max}} \sum\limits_{k=1}^{K} R_{u_k} = \sum\limits_{k=1}^{K} \sum\limits_{n = 1}^{N} \sum\limits_{l = 1}^{L}  ~{x_{k}^{(n,l)}}^{*}   B_{\mathrm {RB}} \log_2 \left(1 +  \gamma_{u_k}^{(n)} \right)$ denote the optimal achievable data rate. In addition, let the variable $D^*$ be defined as
\begin{equation}
D^* \triangleq \underset{\substack{\widehat{C}^{(n,l)} \\ \lbrace n, l \rbrace \in \mathcal{N} \times \mathcal{L}}}{\operatorname{min}} \left\lbrace \sum_{\lbrace n, l \rbrace \in \mathcal{N} \times \mathcal{L}} \widehat{C}^{(n,l)} + \sum_{k=1}^{K} \underset{ \lbrace n, l \rbrace \in \mathcal{N} \times \mathcal{L}}{\operatorname{max}} \left\lbrace B_{k}^{(n,l)} -  \widehat{C}^{(n,l)} \right\rbrace  \right\rbrace.
\end{equation}
Hence from Equation (\ref{eq:auc_prof_ineqal}), we can write $A^* \leq D^*$. Since the final assignment and the set of costs are almost at equilibrium, for any underlay transmitter $k$, the condition $\sum\limits_{n=1}^{N} \sum\limits_{l=1}^{L} x_{k}^{(n,l)}\left(  B_{k}^{(n,l)} - \widehat{C}^{(n,l)}  \right) \geq \underset{ \lbrace n, l \rbrace \in \mathcal{N} \times \mathcal{L}}{\operatorname{max}} \left\lbrace B_{k}^{(n,l)} -  \widehat{C}^{(n,l)} \right\rbrace - \epsilon$ will hold. Consequently, we can obtain the following inequality:
\begin{align}
D^* &\leq \sum_{k=1}^{K} \left( \sum_{n=1}^{N} \sum_{l=1}^{L} x_{k}^{(n,l)} \widehat{C}^{(n,l)} + \underset{ \lbrace n, l \rbrace \in \mathcal{N} \times \mathcal{L}}{\operatorname{max}} \left\lbrace B_{k}^{(n,l)} -  \widehat{C}^{(n,l)} \right\rbrace  \right) \nonumber \\
&\leq \sum_{k=1}^{K}\sum_{n=1}^{N} \sum_{l=1}^{L} x_{k}^{(n,l)}  B_{k}^{(n,l)} + K \epsilon \leq \sum_{k=1}^{K} R_{u_k} + K \epsilon \leq A^* + K \epsilon. 
\end{align}
Since $A^* \leq D^*$, the data rate achieved by the auction algorithm is within $K \epsilon$ of the optimal data rate $A^*$ and the proof follows. 
\end{proof}

\section{Qualitative Comparison Among the Resource Allocation Schemes} \label{sec:comparisons}

In this section, we compare the different resource allocation schemes discussed above based on several criteria (e.g., flow of algorithm execution, information requirement and algorithm overhead, complexity and optimality of the solution, convergence behavior etc.). We term the centralize solution (which can be obtained by solving the optimization problem $\mathbf{P\ref{opt:combopt}}$) as COS (centralized optimal scheme) and compare it with the distributed solutions. A comparison among the resource allocation schemes is presented in Table \ref{tab:comp}.


\begin{table}[!h] 
\centering
\begin{footnotesize}
\begin{tabular}{P{2.0cm} P{2.5cm} P{2.5cm} P{2.5cm} P{2.5cm} }
\toprule 
\multirow{2}{*}{Criterion} &  \multicolumn{4}{c}{Schemes}\\  \cline{2-5} 
&  \multicolumn{1}{c}{COS} & \multicolumn{1}{c}{Stable matching} & \multicolumn{1}{c}{Message passing} & \multicolumn{1}{c}{Auction method} \\
\midrule 
Type of the solution & Centralized & Distributed & Distributed & Distributed\\
\midrule 
Algorithm execution & MBS solves the resource optimization problem (e.g., $\mathbf{P\ref{opt:combopt}}$) & MBS and underlay transmitters locally update the preference profiles, MBS runs the matching subroutine & MBS and underlay transmitters alliteratively exchange the messages, MBS computes the marginals and selects allocation & Each underlay transmitters locally runs the auction subroutine, MBS collects the parameters from all the transmitters and broadcast required parameters needed for the auction subroutine\\ 
\midrule 
Optimality & Optimal & Weak Pareto optimal & Optimal subject to the weight $\omega$ & Within $K\epsilon$ to the optimal\\
\midrule 
Complexity & $\mathcal{O}\left( \left(NL \right)^{K} \right)$ at the MBS & \resizebox{.99\hsize}{!}{$\mathcal{O}\left( T NL \log(NL) \right)$} at the transmitters, $\mathcal{O}(TKNL) $ at the MBS & \resizebox{.99\hsize}{!}{  $\mathcal{O} \left(T {(NL)}^2 \log \left( NL \right) \right)$} at the transmitters, $\mathcal{O}\left( T K N L \right)$ at the MBS & For each iteration linear with $N, L$ at the transmitters, overall running time $\mathcal{O}\left( T KNL \Upsilon \right)$\\
\midrule 
Convergence behavior &  N/A & Converges to a stable matching and hence to a fixed allocation & Converges to a fixed marginal and to a fixed allocation & Converges to a fixed allocation within $K\epsilon$ of the optimal \\
\midrule 
Information required by the MBS & Channel gains (e.g., CSI parameters) between all the links of the network & The preference profiles and the channel gains $G_{k}^{(n)} = \left[ g_{k,m_{k}^*}^{(n)} \right]_{\forall k, n}$ & The messages $\left[ \psi_{k \rightarrow \lbrace n,l \rbrace} \right]_{\forall k,n,l}$ and the channel gains $G_{k}^{(n)} = \left[ g_{k,m_{k}^*}^{(n)} \right]_{\forall k, n}$  & The channel gains $G_{k}^{(n)} = \left[ g_{k,m_{k}^*}^{(n)} \right]_{\forall k, n}$, local assignments $\mathbf{x}_k$, the cost $\mathbf{C}_k$, and the bidding information $\mathfrak{B}_k$ for $\forall k$  \\
\midrule 
Algorithm overhead & High (exponential) computational complexity, requirement of all CSI parameters of the network & Build the preference profiles, exchange information to update preference profiles, execution of matching subroutine & Calculation and exchange of messages, computation of the marginals & Computation and exchange of the parameters, e.g.,  $I^{(n)}$ for $\forall n$, the allocation vector $\mathbf{X}$, information about highest bidders $\mathfrak{B}$, the cost vector $\mathbf{C}$\\
\toprule

\end{tabular}

\end{footnotesize}
\caption{Comparison among different resource allocation approaches}{}
\label{tab:comp}
\end{table}

\section{Chapter Summary and Conclusion} \label{sec:conclusion}

We have presented three comprehensive distributed solution approaches for the future 5G cellular mobile communication systems. Considering a heterogeneous multi-tier 5G network, we have developed distributed radio resource allocation algorithms using three different mathematical models (e.g., stable matching, message passing, and auction method). The properties (e.g., convergence, complexity, optimality) of these distributed solutions are also briefly analyzed. To this end, a qualitative comparison of these schemes is illustrated.  

The solution tools presented in this chapter can also be applicable to address the resource allocation problems in other enabling technologies for 5G systems. In particular, the mathematical tools presented in this chapter open up new opportunities to investigate other network models, such as resource allocation problems for wireless virtualization \cite{wnv_1} and cloud-based radio access networks \cite{c_ran1}. In such systems, these modeling tools need to be customized accordingly based on the objective and constraints required for the resource allocation problem. 

In addition to the presented solutions, there are few game theoretical models which have not been covered in this chapter. However, these game models can also be considered as potential distributed solution tools. Different from traditional cooperative and non-cooperative games, the game models (such as mean field games \cite{mfg_schedule, mfg_crn}, evolutionary games \cite{evo_wireless} etc.) are scalable by nature, and hence applicable to model such large heterogeneous 5G networks. Utilizing those advanced game models for the resource allocation problems and analyzing the performance (e.g., data rate, spectrum and energy efficiency etc.) of 5G systems could be an interesting area of research.

\clearpage
\bibliographystyle{IEEEtran}

\begin{thebibliography}{10}
\providecommand{\url}[1]{#1}
\csname url@samestyle\endcsname
\providecommand{\newblock}{\relax}
\providecommand{\bibinfo}[2]{#2}
\providecommand{\BIBentrySTDinterwordspacing}{\spaceskip=0pt\relax}
\providecommand{\BIBentryALTinterwordstretchfactor}{4}
\providecommand{\BIBentryALTinterwordspacing}{\spaceskip=\fontdimen2\font plus
\BIBentryALTinterwordstretchfactor\fontdimen3\font minus
  \fontdimen4\font\relax}
\providecommand{\BIBforeignlanguage}[2]{{%
\expandafter\ifx\csname l@#1\endcsname\relax
\typeout{** WARNING: IEEEtran.bst: No hyphenation pattern has been}%
\typeout{** loaded for the language `#1'. Using the pattern for}%
\typeout{** the default language instead.}%
\else
\language=\csname l@#1\endcsname
\fi
#2}}
\providecommand{\BIBdecl}{\relax}
\BIBdecl

\bibitem{metis_5g}
A.~Osseiran, F.~Boccardi, V.~Braun, K.~Kusume, P.~Marsch, M.~Maternia,
  O.~Queseth, M.~Schellmann, H.~Schotten, H.~Taoka, H.~Tullberg, M.~Uusitalo,
  B.~Timus, and M.~Fallgren, ``{Scenarios for 5G mobile and wireless
  communications: the vision of the METIS project},'' \emph{IEEE Communications
  Magazine}, vol.~52, no.~5, pp. 26--35, May 2014.

\bibitem{horizon_5g}
P.~Demestichas, A.~Georgakopoulos, D.~Karvounas, K.~Tsagkaris, V.~Stavroulaki,
  J.~Lu, C.~Xiong, and J.~Yao, ``{5G on the horizon: key challenges for the
  radio-access network},'' \emph{IEEE Vehicular Technology Magazine}, vol.~8,
  no.~3, pp. 47--53, September 2013.

\bibitem{toshiba_5g}
W.~H. Chin, Z.~Fan, and R.~Haines, ``{Emerging technologies and research
  challenges for 5G wireless networks},'' \emph{IEEE Wireless Communications},
  vol.~21, no.~2, pp. 106--112, April 2014.

\bibitem{d2d_5g}
M.~Tehrani, M.~Uysal, and H.~Yanikomeroglu, ``{Device-to-device communication
  in 5G cellular networks: challenges, solutions, and future directions},''
  \emph{IEEE Communications Magazine}, vol.~52, no.~5, pp. 86--92, May 2014.

\bibitem{nw_dens_5g}
N.~Bhushan, J.~Li, D.~Malladi, R.~Gilmore, D.~Brenner, A.~Damnjanovic,
  R.~Sukhavasi, C.~Patel, and S.~Geirhofer, ``{Network densification: the
  dominant theme for wireless evolution into 5G},'' \emph{IEEE Communications
  Magazine}, vol.~52, no.~2, pp. 82--89, February 2014.

\bibitem{multi_rat_5g}
N.~Himayat, S.~ping Yeh, A.~Panah, S.~Talwar, M.~Gerasimenko, S.~Andreev, and
  Y.~Koucheryavy, ``Multi-radio heterogeneous networks: architectures and
  performance,'' in \emph{International Conference on Computing, Networking and
  Communications (ICNC)}, February 2014, pp. 252--258.

\bibitem{mimo_5g}
C.-X. Wang, F.~Haider, X.~Gao, X.-H. You, Y.~Yang, D.~Yuan, H.~Aggoune,
  H.~Haas, S.~Fletcher, and E.~Hepsaydir, ``{Cellular architecture and key
  technologies for 5G wireless communication networks},'' \emph{IEEE
  Communications Magazine}, vol.~52, no.~2, pp. 122--130, February 2014.

\bibitem{mmw_5g}
T.~Rappaport, S.~Sun, R.~Mayzus, H.~Zhao, Y.~Azar, K.~Wang, G.~Wong, J.~Schulz,
  M.~Samimi, and F.~Gutierrez, ``{Millimeter wave mobile communications for 5G
  cellular: it will work!}'' \emph{IEEE Access}, vol.~1, pp. 335--349, 2013.

\bibitem{5g_shilpa}
S.~Talwar, D.~Choudhury, K.~Dimou, E.~Aryafar, B.~Bangerter, and K.~Stewart,
  ``{Enabling technologies and architectures for 5G wireless},'' in \emph{IEEE
  MTT-S International Microwave Symposium (IMS)}, June 2014, pp. 1--4.

\bibitem{hetnet_book_sir}
E.~Hossain, L.~B. Le, and D.~Niyato, \emph{Radio resource management in
  multi-tier cellular wireless networks}.\hskip 1em plus 0.5em minus
  0.4em\relax John Wiley \& Sons, 2013.

\bibitem{prabo_journal}
P.~Semasinghe, E.~Hossain, and K.~Zhu, ``An evolutionary game for distributed
  resource allocation in self-organizing small cells,'' \emph{IEEE Transactions
  on Mobile Computing}, vol.~99, no. PrePrints, 2014.

\bibitem{network_asst_d2d}
G.~Fodor, E.~Dahlman, G.~Mildh, S.~Parkvall, N.~Reider, G.~Mikl{\'o}s, and
  Z.~Tur{\'a}nyi, ``Design aspects of network assisted device-to-device
  communications,'' \emph{IEEE Communications Magazine}, vol.~50, no.~3, pp.
  170--177, March 2012.

\bibitem{ref_user}
K.~Son, S.~Lee, Y.~Yi, and S.~Chong, ``{REFIM: a practical interference
  management in heterogeneous wireless access networks},'' \emph{IEEE Journal
  on Selected Areas in Communications}, vol.~29, no.~6, pp. 1260--1272, June
  2011.

\bibitem{sm_def}
E.~Jorswieck, ``Stable matchings for resource allocation in wireless
  networks,'' in \emph{17th International Conference on Digital Signal
  Processing (DSP)}, July 2011, pp. 1--8.

\bibitem{matching_org_paper}
D.~Gale and L.~S. Shapley, ``College admissions and the stability of
  marriage,'' \emph{American Mathematical Monthly}, pp. 9--15, 1962.

\bibitem{matchingbook_two}
A.~E. Roth and M.~A.~O. Sotomayor, \emph{Two-sided matching: a study in
  game-theoretic modeling and analysis}.\hskip 1em plus 0.5em minus 0.4em\relax
  Cambridge University Press, 1992, no.~18.

\bibitem{matching_thesis}
G.~O'Malley, ``Algorithmic aspects of stable matching problems,'' Ph.D.
  dissertation, University of Glasgow, 2007.

\bibitem{factorgraph_thory}
F.~R. Kschischang, B.~J. Frey, and H.~A. Loeliger, ``Factor graphs and the
  sum-product algorithm,'' \emph{IEEE Transactions on Information Theory},
  vol.~47, no.~2, pp. 498--519, February 2001.

\bibitem{mp_distributive}
S.~Aji and R.~McEliece, ``{The generalized distributive law},'' \emph{IEEE
  Transactions on Information Theory}, vol.~46, no.~2, pp. 325--343, March
  2000.

\bibitem{min-sum-mp}
A.~Abrardo, M.~Belleschi, P.~Detti, and M.~Moretti, ``A min-sum approach for
  resource allocation in communication systems,'' in \emph{IEEE International
  Conference on Communications (ICC)}, June 2011, pp. 1--6.

\bibitem{remp_proof}
M.~Moretti, A.~Abrardo, and M.~Belleschi, ``On the convergence and optimality
  of reweighted message passing for channel assignment problems,'' \emph{IEEE
  Signal Processing Letters}, vol.~21, no.~11, pp. 1428--1432, November 2014.

\bibitem{mp_converge}
D.~P. Bertsekas and J.~N. Tsitsiklis, \emph{Parallel and distributed
  computation: numerical methods}.\hskip 1em plus 0.5em minus 0.4em\relax
  Prentice-Hall, Inc., 1989.

\bibitem{auction_org}
D.~Bertsekas, ``\BIBforeignlanguage{English}{Auction algorithms},'' in
  \emph{\BIBforeignlanguage{English}{Encyclopedia of Optimization}}, C.~Floudas
  and P.~Pardalos, Eds.\hskip 1em plus 0.5em minus 0.4em\relax Springer US,
  2001, pp. 73--77.

\bibitem{auction_base}
M.~Zavlanos, L.~Spesivtsev, and G.~Pappas, ``A distributed auction algorithm
  for the assignment problem,'' in \emph{47th IEEE Conference on Decision and
  Control}, December 2008, pp. 1212--1217.

\bibitem{wnv_1}
C.~Liang and F.~Yu, ``{Wireless network wirtualization: a survey, some Research
  issues and challenges},'' \emph{IEEE Communications Surveys and Tutorials},
  2014.

\bibitem{c_ran1}
M.~Hadzialic, B.~Dosenovic, M.~Dzaferagic, and J.~Musovic, ``{Cloud-RAN:
  innovative radio access network architecture},'' in \emph{55th IEEE
  International Symposium (ELMAR)}, September 2013, pp. 115--120.

\bibitem{mfg_schedule}
M.~Manjrekar, V.~Ramaswamy, and S.~Shakkottai, ``A mean field game approach to
  scheduling in cellular systems,'' in \emph{IEEE International Conference on
  Computer Communications}, April 2014, pp. 1554--1562.

\bibitem{mfg_crn}
H.~Tembine, R.~Tempone, and P.~Vilanova, ``Mean field games for cognitive radio
  networks,'' in \emph{American Control Conference (ACC)}, June 2012, pp.
  6388--6393.

\bibitem{evo_wireless}
H.~Tembine, E.~Altman, R.~El-Azouzi, and Y.~Hayel, ``Evolutionary games in
  wireless networks,'' \emph{IEEE Transactions on Systems, Man, and
  Cybernetics, Part B: Cybernetics}, vol.~40, no.~3, pp. 634--646, June 2010.

\end{thebibliography}

\section*{Additional Reading}


\newcounter{enumTemp}

{\normalsize \noindent $\bullet$~ \textit{5G and Heterogeneous Networks:}}

\begin{enumerate}[label={[\roman*]}]

\item
J.~Andrews, S.~Buzzi, W.~Choi, S.~Hanly, A.~Lozano, A.~Soong, and J.~Zhang,
  ``What will 5G be?'' \emph{IEEE Journal on Selected Areas in Communications},
  vol.~32, no.~6, pp. 1065--1082, June 2014.
  
  
  \item
B.~Bangerter, S.~Talwar, R.~Arefi, and K.~Stewart, ``Networks and devices for
  the 5G era,'' \emph{IEEE Communications Magazine}, vol.~52, no.~2, pp.
  90--96, February 2014.
  
\item
E.~Hossain, M.~Rasti, H.~Tabassum, and A.~Abdelnasser, ``Evolution toward 5G
  multi-tier cellular wireless networks: an interference management
  perspective,'' \emph{IEEE Wireless Communications}, vol.~21, no.~3, pp.
  118--127, June 2014.

\item
Y.~Lee, T.~Chuah, J.~Loo, and A.~Vinel, ``{Recent advances in radio resource
  management for heterogeneous LTE/LTE-A networks},'' \emph{IEEE Communications
  Surveys and Tutorials}, vol.~PP, no.~99, pp. 1--39, 2014.

\end{enumerate}

\vspace{0.5em}
{\normalsize \noindent $\bullet$~ \textit{Stable Matching:}}

\begin{enumerate}[label={[\roman*]}, resume]

\item
K.~Iwama and S.~Miyazaki, ``A survey of the stable marriage problem and its
  variants,'' in \emph{International Conference on Informatics Education and
  Research for Knowledge-Circulating Society (ICKS)}, January 2008, pp. 131--136.

\item 
X.~Feng, G.~Sun, X.~Gan, F.~Yang, X.~Tian, X.~Wang, and M.~Guizani,
  ``Cooperative spectrum sharing in cognitive radio networks: a
   distributed
  matching approach,'' \emph{IEEE Transactions on Communications}, vol.~62,
  no.~8, pp. 2651--2664, August 2014.

\item
A.~Leshem, E.~Zehavi, and Y.~Yaffe, ``Multichannel opportunistic carrier
  sensing for stable channel access control in cognitive radio systems,''
  \emph{IEEE Journal on Selected Areas in Communications}, vol.~30, no.~1, pp.
  82--95, January 2012.

\end{enumerate}

\vspace{0.5em}
{\normalsize \noindent $\bullet$~ \textit{Message Passing:}}

\begin{enumerate}[label={[\roman*]}, resume]

\item 
M.~Hasan and E.~Hossain, ``Distributed resource allocation for relay-aided
  device-to-device communication: a message passing approach,'' \emph{IEEE
  Transactions on Wireless Communications}, 2014.
\item 
A.~Abrardo, M.~Belleschi, P.~Detti, and M.~Moretti, ``Message passing resource
  allocation for the uplink of multi-carrier multi-format systems,'' \emph{IEEE
  Transactions on Wireless Communications}, vol.~11, no.~1, pp. 130--141, 2012.
  
  \item K.~Yang, N.~Prasad, and X.~Wang, ``{A message-passing approach to distributed
  resource allocation in uplink DFT-Spread-OFDMA systems},'' \emph{IEEE
  Transactions on Communications}, vol.~59, no.~4, pp. 1099--1113, 2011.

\end{enumerate}

\vspace{0.5em}
{\normalsize \noindent $\bullet$~ \textit{Auction Algorithm:}}

\begin{enumerate}[label={[\roman*]}, resume]
\item Y.~Zhang, C.~Lee, D.~Niyato, and P.~Wang, ``Auction approaches for resource   allocation in wireless systems: a survey,'' \emph{IEEE Communications Surveys and  Tutorials}, vol.~15, no.~3, pp. 1020--1041, 2013.

\item I.~Koutsopoulos and G.~Iosifidis, ``Auction mechanisms for network resource allocation,'' in \emph{8th International Symposium on Modeling and
  Optimization in Mobile, Ad Hoc and Wireless Networks (WiOpt)}, May 2010, pp. 554--563.
  
  \item 
K.~Yang, N.~Prasad, and X.~Wang, ``An auction approach to resource allocation
  in uplink OFDMA systems,'' \emph{IEEE Transactions on Signal Processing},
  vol.~57, no.~11, pp. 4482--4496, November 2009.
  
  \item 
M.~Bayati, B.~Prabhakar, D.~Shah, and M.~Sharma, ``Iterative scheduling
  algorithms,'' in \emph{26th IEEE International Conference on Computer
  Communications. (INFOCOM)}, May 2007, pp. 445--453.

\end{enumerate}


\end{document}